\newcommand{\nnum}{\nonumber}
\newcommand{\ie}{i.e.}
\newcommand{\eg}{e.g.}
\newcommand{\cf}{cf.}
\newtheorem{theorem}{Theorem}
\newtheorem{lemma}[theorem]{Lemma}
\newtheorem{corollary}[theorem]{Corollary}
\newtheorem{definition}{Definition}
\newcommand{\eqa}{\stackrel{(a)}{=}}
\newcommand{\eqb}{\stackrel{(b)}{=}}
\newcommand{\eqc}{\stackrel{(c)}{=}}
\newcommand{\lea}{\stackrel{(a)}{\le}}
\newcommand{\leb}{\stackrel{(b)}{\le}}
\newcommand{\gea}{\stackrel{(a)}{\ge}}
\newcommand{\geb}{\stackrel{(b)}{\ge}}
\newcommand{\gec}{\stackrel{(c)}{\ge}}
\newcommand{\ged}{\stackrel{(d)}{\ge}}
\DeclareMathOperator{\diag}{diag}
\newcommand{\randcn}[2]{\mathcal{CN}\left(#1,#2\right)}  % complex Gaussian
\newcommand{\randBern}[1]{\mathrm{Bern}\left(#1\right)}  % Bernoulli
\newcommand{\randBino}[2]{\mathrm{Bino}\left(#1,#2\right)}  % Binomial
\newcommand{\calK}{\mathcal{K}}
\newcommand{\calP}{\mathcal{P}}
\newcommand{\calS}{\mathcal{S}}
\newcommand{\calX}{\mathcal{X}}
\newcommand{\calY}{\mathcal{Y}}
\newcommand{\calZ}{\mathcal{Z}}
\newcommand{\bA}{\mathbf{A}}
\newcommand{\bB}{\mathbf{B}}
\newcommand{\bd}{\mathbf{d}}
\newcommand{\bD}{\mathbf{D}}
\newcommand{\be}{\mathbf{e}}
\newcommand{\bh}{\mathbf{h}}
\newcommand{\bI}{\mathbf{I}}
\newcommand{\bR}{\mathbf{R}}
\newcommand{\bS}{\mathbf{S}}
\newcommand{\bmH}{\bm{H}}
\newcommand{\bmS}{\bm{S}}
\newcommand{\bmW}{\bm{W}}
\newcommand{\bmX}{\bm{X}}
\newcommand{\bmY}{\bm{Y}}
\newcommand{\bmZ}{\bm{Z}}
\DeclareMathAlphabet{\mathbsf}{OT1}{cmss}{bx}{n}
\DeclareMathAlphabet{\mathssf}{OT1}{cmss}{m}{sl}% slanted sans serif
\newcommand{\rvP}{\mathsf{P}}
\DeclareSymbolFont{bsfletters}{OT1}{cmss}{bx}{n}
\DeclareSymbolFont{ssfletters}{OT1}{cmss}{m}{n}
\DeclareMathSymbol{\bsfGamma}{0}{bsfletters}{'000}
\DeclareMathSymbol{\ssfGamma}{0}{ssfletters}{'000}
\DeclareMathSymbol{\bsfDelta}{0}{bsfletters}{'001}
\DeclareMathSymbol{\ssfDelta}{0}{ssfletters}{'001}
\DeclareMathSymbol{\bsfTheta}{0}{bsfletters}{'002}
\DeclareMathSymbol{\ssfTheta}{0}{ssfletters}{'002}
\DeclareMathSymbol{\bsfLambda}{0}{bsfletters}{'003}
\DeclareMathSymbol{\ssfLambda}{0}{ssfletters}{'003}
\DeclareMathSymbol{\bsfXi}{0}{bsfletters}{'004}
\DeclareMathSymbol{\ssfXi}{0}{ssfletters}{'004}
\DeclareMathSymbol{\bsfPi}{0}{bsfletters}{'005}
\DeclareMathSymbol{\ssfPi}{0}{ssfletters}{'005}
\DeclareMathSymbol{\bsfSigma}{0}{bsfletters}{'006}
\DeclareMathSymbol{\ssfSigma}{0}{ssfletters}{'006}
\DeclareMathSymbol{\bsfUpsilon}{0}{bsfletters}{'007}
\DeclareMathSymbol{\ssfUpsilon}{0}{ssfletters}{'007}
\DeclareMathSymbol{\bsfPhi}{0}{bsfletters}{'010}
\DeclareMathSymbol{\ssfPhi}{0}{ssfletters}{'010}
\DeclareMathSymbol{\bsfPsi}{0}{bsfletters}{'011}
\DeclareMathSymbol{\ssfPsi}{0}{ssfletters}{'011}
\DeclareMathSymbol{\bsfOmega}{0}{bsfletters}{'012}
\DeclareMathSymbol{\ssfOmega}{0}{ssfletters}{'012}
\newcommand{\tilbh}{\tilde{\bh}}
\newcommand{\hatK}{\hat{K}}
\newcommand{\Csiszar}{\,Csisz\'{a}r~}
\newcommand{\Rphi}{R_{\phi}}
\newcommand{\ns}{M}                 % length of state sequence
\newcommand{\slen}{N_d}             % length of sounding sequence
\newcommand{\maxdelay}{\tau_{\max}}
\newcommand{\maxL}{L_{\max}}             % max resolvable paths
\newcommand{\SetSP}{\bm{S}}          % channel sparse pattern
\newcommand{\varh}{\nu^2}
\newcommand{\varhe}{\upsilon^2}
\newcommand{\hmain}{\bmH_{ab}}
\newcommand{\heve}{\bmH_{e}}
\newcommand{\SPmain}{\SetSP_{ab}}    % main channel Sparsity pattern
\newcommand{\nvar}{\sigma^{2}}               % user's observated noise variance (scalar)
\newcommand{\Cerg}{C_\mathrm{er}}   % ergodic capacity
\newcommand{\Iab}{I_{ab}}
\newcommand{\Iae}{I_{e}}
\newcommand{\Ierg}{I_\mathrm{er}}
\newcommand{\Rerg}{R_\mathrm{er}}
\newcommand{\tsf}{\lambda}                          % Time-sharing factor
\newcommand{\Iinst}{I_\mathrm{s}}
\newcommand{\Reve}{R_e}
\newcommand{\Cs}{C_s}
\newcommand{\Pout}{\rvP_\mathrm{out}}
\newcommand{\commsg}{\Phi}                   % common message
\newcommand{\snr}{\gamma}                           % variable SNR
\newcommand{\taum}{\tau_{\max}}
\newcounter{remarkCounter}
\newcommand{\tc}{\textcolor{magenta}}
\title{Secret Key Generation from Sparse Wireless Channels: Ergodic Capacity and Secrecy Outage}
\author{
  Tzu-Han Chou,~\IEEEmembership{Student Member,~IEEE,}
  Stark C.~Draper,~\IEEEmembership{Member,~IEEE,} \\
  and Akbar M.~Sayeed,~\IEEEmembership{Fellow,~IEEE.}
  \thanks{This work has been supported in part by the National Science Foundation
   under CAREER Grant No.~CF-0844539 and Grant No.~CNS-0627589.}
  \thanks{The authors are with the Dept.~of Electrical and Computer
  Engineering, University of Wisconsin, Madison, WI 53706.  E-mail:
  thchou.th@gmail.com, sdraper@ece.wisc.edu, akbar@engr.wisc.edu.}
}
\begin{document}
\maketitle

\begin{abstract}
This paper investigates generation of a secret key from a reciprocal
wireless channel. In particular we consider wireless channels that
exhibit sparse structure in the wideband regime and the impact of the
sparsity on the secret key capacity.  We explore this problem in two
steps. First, we study key generation from a \emph{state-dependent
  discrete memoryless multiple source}. The state of source captures
the effect of channel sparsity.  Secondly, we consider a wireless
channel model that captures channel sparsity and correlation between
the legitimate users' channel and the eavesdropper's channel. Such
dependency can significantly reduce the secret key capacity.

According to system delay requirements, two performance measures are
considered: (i) ergodic secret key capacity and (ii) outage
probability.  We show that in the wideband regime when a white
sounding sequence is adopted, a sparser channel can achieve a higher
ergodic secret key rate than a richer channel can.  For outage
performance, we show that if the users generate secret keys at a
fraction of the ergodic capacity, the outage probability will decay
exponentially in signal bandwidth. Moreover, a larger exponent is
achieved by a richer channel.

\end{abstract}

\begin{keywords}
  Secret key generation, public discussion, reciprocal wireless channel, channel sounding,
  ergodic capacity, secrecy outage.
\end{keywords}

\section{Introduction}
\label{sec:sparse.intro}

The fundamental limit of secret key generation from discrete
memoryless multiple source (DMMS) is developed by Ahlswede, \Csiszar
\cite{Ahlswede_Csiszar93} and Maurer \cite{maurer93}.  Their results
show that if $X,Y,Z$ (respectively observed by Alice, Bob and Eve) are
correlated with a known distribution, it is possible to generate a
secret key between Alice and Bob at a positive rate through use of a
public discussion. The resulting information rate leaked to Eve can be
made arbitrarily small.  The supremum of achievable secret key rates
is called the \emph{secret key capacity}.

Since their work, there have been many extensions to explore the secret
key capacity of more complicated models. In \cite{Khisti_isit08,
  Prabhakaran08}, users observe DMMS and also transmit information via
wiretap channel \cite{Wyner75}, but there is no access to public
channel for discussion.  The authors in \cite{ChenVinck06, Liu07,
  Khisti_skey_asym_isit09, KhistiEtal_skey_TCSI} consider a wiretap
channel influenced by a random channel state, known by one (or both)
of the legitimate users.  In such models, the random channel state can
be viewed as a kind of correlated source shared by
transmitter/receiver which also influences the transmission.

In \cite{chou_it10, chou_allerton11}, key generation from DMMS is
considered where the DMMS is excited by a deterministic source
\cite{chou_it10} or by a random source \cite{chou_allerton11}. This
sender-excited model is motivated by an application in which key
generation is based on the inherent randomness of reciprocal wireless
channel.  Consider a situation where Alice and Bob transmit a sounding
signal to each other over a reciprocal wireless channel. Due to the
channel reciprocity, Alice and Bob observe a pair of correlated
sources. The source turns out to be a good source for secret key
generation because it can be the case that (i) the source is
correlated, (ii) the source is ubiquitous since it is from wireless
channel, and (iii) it is hard to eavesdrop because the wireless channel
varies quickly in the spatial and temporal domains.  This issue has
received much attention in terms of theoretical and practical research
\cite{hessan96, wilson_Tse_Scholtz07, Ye_vtc07, YeEtal,Wallace09,
  Patwari10,Aono05,sayeed08}.  However, most of this work is subject to the
assumption that the eavesdropper channel is statistically independent
of the main channel (the channel between Alice and Bob). This is true
when the environment has rich scattering such that the correlation
between channel coefficients decreases rapidly in the spatial domain.

However, there is growing experimental evidence (\eg, \cite{Molisch05,
  GJRTT04, Yan07, Czink07}) and physical arguments (\eg,
  \cite{Sayeed06, Bajwa09, Bajwa10}) which show
that realistic wireless channels are sparse at large bandwidths.  The
effect of channel sparsity on secret key capacity is twofold: (i) it
reduces the degrees-of-freedom (DoF) of the main (Alice-Bob) channel
and (ii) it induces spatial correlation \cite{LeeTC73}, thereby
increasing Eve's ability to observe the main channel.

We revisit the key generation problem when the channel exhibits
sparsity in the wideband regime.  This channel characteristic can be
captured by a \emph{sparsity pattern} that defines the non-zero
support of the channel coefficients. Depending on the environment, the
sparsity pattern could experience fast or slow time variations.  The
channel model also captures the correlation between the main channel and
Eve's observations.  To study secret key generation in this context we
capture these characteristics by defining a \emph{state-dependent}
discrete multiple memoryless source (SD-DMMS).  We specialize this
model to the statistical characterization of sparse wireless channels
were the sparsity pattern plays a role of the channel state and, as we
discuss next, develop ergodic capacity and secrecy outage results.

In analogy to communication over a fading channel, two regimes are
studied according to the system delay constraint:
\begin{itemize}
  \item \emph{Ergodic regime (the delay tolerant regime)}: If the key
    is generated based on a large number of observations across
    multiple states, the secret key capacity is well-defined in the
    Shannon sense. We call the capacity in this case the
    \emph{ergodic} secret key capacity. The main problem is that the
    system suffers from an excessive delay.

  \item \emph{Non-ergodic regime (the delay stringent regime)}: If the
    observed source sequence is not long enough or the state changes
    slowly so that the key generation is forced to occur within a
    period of constant state, the capacity is not defined in
    general. In this case, we consider the \emph{secrecy outage
      probability} which measures the probability that the
    instantaneous state condition cannot support the key rate to
    fulfill the secrecy condition (this will be defined later).
\end{itemize}

Secrecy outage is also considered in other research regarding
state-dependent (fading) wiretap channel (\eg, \cite{Tang09,
  GungorArXiv}).  We show that when a white sounding sequence is
adopted in the wideband (low power) regime, a sparser channel can
achieve a higher secret key rate than a richer channel
can. This is analogous to capacity behavior in sparse multi-antenna
channels in \cite{Sayeed07}.
Furthermore, at each signal-to-noise ratio (SNR), there is an
adequate bandwidth that maximizes the secret key rate.  For the outage
performance, we show that the system can achieve an exponential
decaying outage probability by using an $\alpha$\emph{-backoff} scheme
($0<\alpha\leq 1$) in which secret key rate is a fraction $\alpha$ of
the ergodic capacity.  Unlike the ergodic case, now a richer channel
always has a larger exponent characterizing the decay of the outage
probability. In a similar vein as communication over a fading channel,
this demonstrates that a large number of DoF helps to smooth out the effect of the
unknown state.

The paper is organized as following. In
Section~\ref{sec:sparse.system_model} we give some definitions and
describe the system model. This includes the correlated sparse
wireless channel model, the definition of the SD-DMMS, and the one-way
discussion key generation protocol.  In
Section~\ref{sec:sparse.ergodic}, we investigate the ergodic secret
key capacity of SD-DMMS and apply this to key generation from a sparse
wireless channel.  Outage is defined in
Section~\ref{sec:sparse.outage}. We give a necessary and sufficient
condition for an outage event and explore the outage probability when
an $\alpha$-backoff scheme is used. Detailed proofs are deferred for
the Appendix.

\section{Definitions and System Model}
\label{sec:sparse.system_model}

In this paper we are motivated by key generation based on
  wireless channel that exhibits sparsity in the delay domain.  We
  first develop our model of a sparse wireless channel in Section
  \ref{sec:sparse_channel_model}.  While in earlier works on modeling
  sparse wireless channels, e.g., see \cite{Molisch05, Cotter02, Carbonelli07, Li07,
  Raghavan07, Raghavan2011}, there is only a single channel
  to model, in Section \ref{sec:sparse_channel_model} we need to model
  the main (Alice-to-Bob) channel as well as Eve's correlated
  observations of that main channel.  Following our wireless
  motivations, in Section \ref{sec:sd_dms} we develop an abstracted
  \emph{state-dependent discrete multiple source} (SD-DMMS) model.  In
  this model the ``state'' captures the effect of the slowly varying
  sparsity pattern while the key itself is extracted from the
  conditionally-generated (conditioned on the sparsity pattern)
  channel fades.  Finally, in Section \ref{sec:key_gen_protocol} the
  \emph{one-way public discussion} key generation protocol is formally
  presented.

\subsection{Sparse reciprocal wireless channel}\label{sec:sparse_channel_model}

%The pattern of sparsity is modeled as a random state from which the
%random channel coefficients are conditionally generated.

Consider a wireless communication system with bandwidth $W$.  Say that
the channel exhibits sparsity in the delay domain\footnote{In this
  paper, we consider channel sparsity in the delay domain. It is not
  difficult to extend the result to the sparsity in either the Doppler
  or spatial domains, e.g., \cite{Sayeed06, Bajwa09, Bajwa10,
    Raghavan2011}.}  where $\maxdelay$ is the maximum delay spread of
the channel. Then $\maxL = \lceil \maxdelay W \rceil$ is the maximum
number of resolvable paths.  A sounding sequence $\bd = [d_1, d_2,
  \cdots, d_{\slen}]^T$ is transmitted over time period $T$, where
$\slen=\lceil TW \rceil$. The sounding sequence is a known sequence
with power $\bd^H\bd = P$.  We assume each two-way (Alice
$\leftrightarrows$ Bob) sounding is done within a channel coherence
period (\ie, $T_{coh} \gg 2T$).  Further multiple channel soundings
(indexed by $t$) are performed within non-overlapping coherence
periods meaning that each set of soundings are independent.

The channel outputs in sounding interval $t$ are
\begin{subequations}\label{equ:sparse_channel_output}
\begin{align}
    \bmX[t] &= \bD \hmain[t] + \bmW_1[t] \qquad \text{(Alice)} \ , \\
    \bmY[t] &= \bD \hmain[t] + \bmW_2[t] \qquad \text{(Bob)} \ ,
\end{align}
\end{subequations}
where $\hmain[t] = (H_1[t], \cdots, H_{\maxL}[t])^T$ is the
sampled (virtual) channel coefficient
\cite{Sayeed06,sayeed_bookchapter} vector, and $\bD$ is an
$N$-by-$\maxL$ Toeplitz matrix with $N = \slen +\maxL-1$:
\begin{equation*}
    \bD =
    \left[
      \begin{array}{cccc}
        d_1       & 0   &  \cdots & 0 \\
        d_2       & d_1 &  \cdots & 0 \\
        \vdots    & d_2 &  \cdots & d_1 \\
        d_{\slen} & \vdots        &         & d_2 \\
        \vdots    & d_{\slen}     & \ddots  &     \\
        0         & 0   & \ddots  & \vdots        \\
        0         & 0   & 0       & d_{\slen}     \\
      \end{array}
    \right]
    =
    \left[
      \begin{array}{cccc}
        \bd_1, \bd_2, \cdots, \bd_{\slen}
      \end{array}
    \right] .
\end{equation*}
A widely used sounding signal is a sequence whose spectrum is
  asymptotically white in $\slen$. In this case $\bD$ is a full
column-rank matrix such that\footnote{Here and in the following, we
  say $g(x^n) \doteq g$ if $g(x^n) \to g$ when $n$ is sufficiently
  large.}
\begin{equation}\label{equ:white_sequence_assumption}
    \bD^H\bD \doteq P \bI_{\maxL}
\end{equation}
when $\slen$ is sufficiently large. One such example is
$\bd=\sqrt{P}\be = \sqrt{P}(1, 0, \cdots, 0)^T$. Another such example
is pseudo-random (PN) sequence in spread spectrum system
\cite{SSHandbook}.  The noise terms $\bmW_1[t]$ and $\bmW_2[t]$ in
\eqref{equ:sparse_channel_output} are independent
$\randcn{\mathbf{0}}{\nvar_a \bI_{N}}$ and
$\randcn{\mathbf{0}}{\nvar_b \bI_{N}}$ vectors, respectively.

\subsubsection{Sparse channel model}
Most channels that have a small number of physical paths will exhibit
sparsity in the delay domain as the signal bandwidth $W$
increases. In particular, in some delay bin $\ell$, the
  corresponding channel coefficient $H_\ell[t]$ will be zero.  In
this paper, we adopt the \emph{sub-linear} law model considered in
previous work \cite{Raghavan07, Raghavan2011} to capture the sparse
channel characteristic. In this model, the channel is called
\emph{$\delta$-sparse} if the average number of non-zero channel
coefficients scales as
\begin{equation}\label{equ:sublinear_law}
    L = (\maxdelay W)^{\delta} = \maxL^{\delta}, \qquad \delta \in (0,1) \ .
\end{equation}
The parameter $L$ is also the mean number of channel DoF.

The \emph{channel sparsity pattern} of the main channel in sounding
interval $t$ is
\begin{equation*}
 \SPmain[t] = \Big( S_{ab,1}[t], \cdots, S_{ab,\maxL}[t] \Big) \in \calS^{\maxL},
\end{equation*}
where $\calS=\{0,1\}$ and $E\left[\sum_{\ell=1}^{\maxL}
  S_{ab,\ell}[t]\right]=L$.  This pattern defines the support of the
channel vector
\begin{equation*}
  \hmain[t] = \Big( H_1[t], \ H_2[t] \, \ldots \, H_{\maxL}[t] \Big),
\end{equation*}
i.e., $H_{\ell}[t] = 0$ if and only if $S_{ab, \ell}[t] = 0$.  The
channel coefficients $H_\ell[t]$ are independent
$\randcn{0}{\varh_\ell}$ variable where the variance $\varh_\ell=0$ if
$S_{ab,\ell}[t] = 0$.  The channel has \emph{unit} power, \ie,
$\sum_{\ell} \varh_\ell = 1$.  Later, we use ``channel
degrees-of-freedom'' (DoF) to refer to the \emph{weight} of the
realization of the vector $\SPmain[t]$.  We also call $\SPmain[t]$ the
\emph{state} of $\hmain[t]$.  A \emph{rich} multipath channel
corresponds to $\delta \to 1$.

The sparsity pattern $\SPmain[t]$ will, in general, be
time-varying. However, in most case of interest, $\SPmain[t]$ will
change much more slowly than the channel coefficients $\hmain[t]$.
This is because the main reflectors, by which paths are resolved by
different delay bins, move more slowly than the phase changes that
influencs the fading coefficients \cite{sayeed_bookchapter, Proakis_DigitComm, Goldsmith2005}.
Because of this, most of the secret key rate will
be generated by the randomness inherent to the channel coefficients
rather than the sparsity pattern itself.  Furthermore, there exists
good techniques to estimate the sparsity pattern reliably based on few
observations, \eg, \cite{FRG09}. Thus, we consider $\SPmain[t]$ known to
Alice and Bob.  Let $n$ be the number of channel sounding periods
during which the sparsity pattern remains constant.  We term this the
\emph{sparsity coherence period}.  Thus, the $m$-th sparsity coherent
period extends from $t = (m-1)n+1$ to $t = mn$.  In this interval
$\SPmain[t]$ remains constant, \ie, $\SPmain[t] = \SPmain[mn]$ for all
$t$, $(m-1)n+1 \leq t \leq mn$.  We further assume that $\SPmain[t]$ is
independent across periods.

Modeling the distribution of the state itself is a difficult task, so
we consider a simple model
\begin{equation}
    \Pr(S_{ab,\ell}=1) = \frac{L}{\maxL} =  (\maxdelay W)^{-(1-\delta)} \triangleq \rho
\end{equation}
for all $\ell$.  In other words, the $S_{ab,\ell}$ is Bernoulli
distribution with parameter $\rho$ (denoted $\randBern{\rho}$).

\subsubsection{Eavesdropper's correlation model}
Eve's channel output is similar to
\eqref{equ:sparse_channel_output}\footnote{In order to get meaningful
  observations, we assume Eve is located close to one of the users. So
  only one of the two Eve's channel outputs during the two-way
  sounding correlates with the main channel. The other output is
  independent of the main channel due to fast spatial decorrelation.}:
\begin{align}\label{equ:sparse_Eve_output}
    \bmZ[t] &= \bD \heve[t] + \bmW_3[t]  \qquad \text{(Eve)} \ ,
\end{align}
where the noise is $\randcn{\mathbf{0}}{\nvar_e \bI_{N}}$. The channel
coefficient vector $\heve[t]\!=\!(H_{e,1}[t], \cdots, H_{e,\maxL}[t])^T$
is also $\delta$-sparse with state denoted by $\SetSP_e[t]$, and each
element $\randcn{0}{\varhe_\ell}$ distributed. We model the
correlation between $\heve[t]$ and $\hmain[t]$ in a two-step process
as follows:
\begin{itemize}
  \item \emph{Correlation between $\SetSP_e$ and $\SPmain$}: For each
    delay bin $\ell$ for which $S_{ab,\ell}=1$, the probability that
    Eve also has non-zero channel gain is $\theta$. i.e.,
      \begin{equation}\label{equ:overlap_model}
         \Pr(S_{e,\ell}=1| S_{ab,\ell}=1) = \theta
      \end{equation}
      for all $1 \leq \ell \leq \maxL$.

  \item \emph{Correlation between individual channel coefficient}: For
    those channel coefficients in the ``common support'' delay bins,
    \ie, in the set $\{\ell: S_{ab,\ell} = S_{e,\ell} = 1\}$, the
    correlation coefficients are
      \begin{align*}
         &\eta(H_\ell, H_{e,\ell}) \triangleq \frac{E[H_\ell H^*_{e,\ell}]}{\sqrt{E[|H_\ell|^2]E[|H_{e,\ell}|^2]}} = \eta \ .
      \end{align*}
\end{itemize}

The parameter $\theta$ captures the fraction of DoF that the main
channel and Eve's channel have in common. One can think of the
relationship between the state (sparsity pattern) of the main channel
and that of the eavesdropper's observation as a binary memoryless
channel.  However, because Eve's marginal channel has the same
$\delta$-sparsity as the main channel (since the users are in the same
environment), the channel is not symmetric.  In other words,
transition probability $\Pr(S_{e,\ell}=1| S_{ab,\ell}=0) \neq
\Pr(S_{e,\ell}=0| S_{ab,\ell}=1) $.  This is illustrated in Figure
\ref{fig:basc}.  Finally, the parameter $\eta$ captures the effect
that the paths (of both channel) located in the common delay bin
shares the same physical scattering.

\begin{figure}
  \vspace{1mm}
  \center
  % Requires \usepackage{graphicx}
  \includegraphics[width=0.4\textwidth]{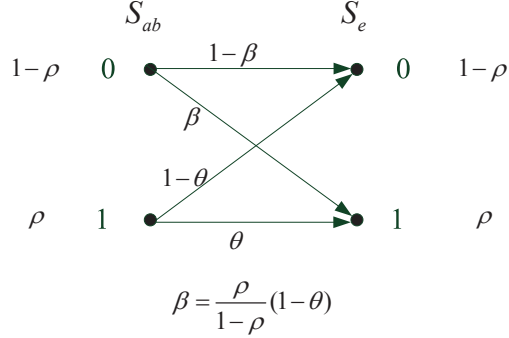}\\
  \caption{Transition probability $\Pr(S_{e,\ell}| S_{ab,\ell})$. }
  \label{fig:basc}
\end{figure}

\emph{Remark} \arabic{remarkCounter}: \stepcounter{remarkCounter} The
parameter space $\{(\theta, \eta), \delta\}$ of our model captures
many scenarios of interest.  From a physical aspect, there are two
factors effecting Eve's channel correlation: the distance between Eve
and Bob (which mainly impacts leakage to Eve), and richness/sparseness
of the multipath (which impacts both leakage to Eve and the common
randomness between Alice and Bob).  When Eve gets close to Bob,
generally, both $\theta$ and $\eta$ will increase (and vice versa);
this will generally increase the leakage.  The parameter $\delta$ (and
thus $\rho$) controls the maximum number of DoF.  When multipath is
rich, $\rho$ is high and $\eta$ is closer to zero (\ie, high overlap
but independent), resulting in the highest capacity and lowest
leakage.  For sparse multipath, $\rho$ is lower (lower common
randomness) and $\eta$ could be large even at larger distances between
Eve and Bob. In this case leakage will likely increase more slowly as
Eve gets closer to Bob.

To generate a secret key, users repeat the channel sounding
\eqref{equ:sparse_channel_output} (and \eqref{equ:sparse_Eve_output})
$n\ns$ times and generate a key based on a pair of super-block
$\{(\bmX[t], \SPmain[t]), (\bmY[t], \SPmain[t])\}_{t=1}^{n\ns}$.  In the following section we
abstract away the actual sounding process and specify a
state-dependent source model where the state varies more slowly than
the underlying source-realization process from which the key is
generated.  When we study the ergodic case, we will let both $n$ and
$\ns$ go to infinity, while when we study the outage case, $\ns=1$,
and $n$ can be large.

\subsection{State-dependent discrete memoryless multiple source}\label{sec:sd_dms}

To leverage results on information theoretic security, we consider a
state-dependent (SD) DMMS model depicted in
Figure \ref{fig:sddmms_model}. The observation triple
$(X^{nM},Y^{n\ns},Z^{n\ns}) \in \calX^{n\ns} \times \calY^{n\ns}
\times \calZ^{n\ns}$ is generated according to
$p(x^{n\ns},y^{n\ns},z^{n\ns}|s_{ab}^\ns, s_e^\ns)$, conditioning on
the pair of length-$\ns$ sequences: $(s_{ab}^\ns, s_e^\ns) \in
\calS^\ns \times \calS^\ns$.

As discussed in Section~\ref{sec:sparse_channel_model}, $S_{ab}^\ns$
is the state sequence of Alice and Bob's correlated source
$X^{n\ns}$,$Y^{n\ns}$ and $S_e^\ns$ is the state sequence of Eve's
observation $Z^{n\ns}$. The states have joint distribution
$p(s_{ab}^\ns, s_e^\ns)$.  Since the states vary more slowly than the
conditonally-generated sources, there is a length of time $n$ during
which the states remain constant.  This correponds to the sparsity
coherence period discussed earlier.  A large $n$ means that the
states are changing slowly. We assume that the states are available to
the corresponding observers but not to other users. In other words,
Alice and Bob both know $S_{ab}$ but not $S_e$ while Eve knows $S_e$
but not $S_{ab}$.  This is depicted in Figure \ref{fig:sddmms_model}.
We call the state \emph{memoryless} if
\begin{equation}
    p(s_{ab}^\ns, s_e^\ns) = \prod_{m=1}^\ns p(s_{ab,m}, s_{e,m}) \ . \label{eq.stateModel}
\end{equation}
Similarly, the source is \emph{memoryless} if
\begin{align}
    p(x^{n\ns},&y^{n\ns},z^{n\ns}|s_{ab}^\ns, s_e^\ns) \nnum \\
        &= \prod_{m=1}^{\ns} \prod_{i=(m-1)n+1}^{mn} p(x_i,y_i,z_i|s_{ab,m}, s_{e,m}) \ . \label{eq.condGenProcess}
\end{align}
Note that in~\eqref{eq.condGenProcess} one see the effect of the
sparsity coherence period.  The triplet of source samples $(X_i, Y_i,
Z_i)$ is conditionally and independently generated from the same state pair $(S_{ab,m},
S_{e,m})$ for all $i$, $(m-1)n < i \leq mn$.
Each of $(X_i, Y_i, Z_i)$ stands for the vector of channel output in \ref{sec:sparse_channel_model}.

In the one-way discussion protocol (which will be detailed next in
\ref{sec:key_gen_protocol}), Alice sends a message $\commsg$ over a
public channel.  Bob recovers Alice's key based on his observation
$(Y^{n\ns}, S_{ab}^\ns)$ and $\commsg$. Eve's source $Z$ is a \emph{degraded}
version of $Y$ if
\begin{equation}\label{equ:degradness_defn}
    p(x,y,z|s_{ab}, s_{e}) =  p(x,y|s_{ab}) p(z|y, s_{ab}, s_{e}) \ .
\end{equation}
In other words, for given states $(s_{ab}, s_{e})$, Eve's output is a
cascade of the Bob's output and a channel represented by $p(z|y,
s_{ab}, s_{e})$.

\begin{figure}
  \center
  % Requires \usepackage{graphicx}
  \includegraphics[width=0.55\textwidth]{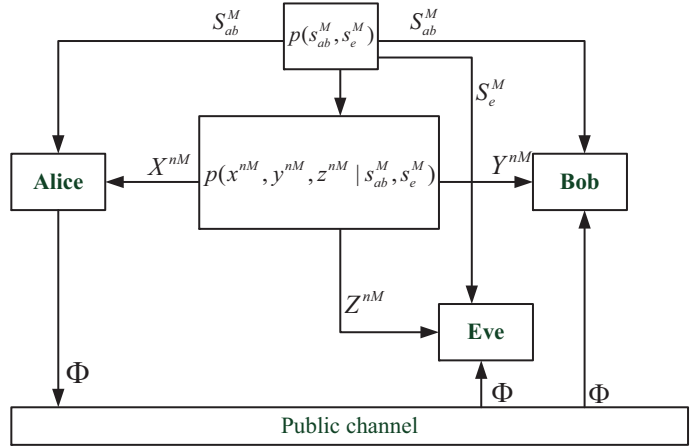}\\
  \caption{State-dependent DMMS model}\label{fig:sddmms_model}
\end{figure}

\subsection{One-way discussion key generation protocol} \label{sec:key_gen_protocol}

Let $\calK=[1:2^{nR}]$ be the key space. There is an authenticated
public channel available to users to exchange error-free public
messages in the set $\Phi=[1:2^{n\Rphi}]$. The one-way public
discussion secret key generation protocol consists of three functions:
\begin{subequations} \label{equ:keygen_system}
\begin{align}
    &f_1  : \calX^{n\ns} \times \calS^\ns \to \calK \ , \\
    &g : \calX^{n\ns} \times \calS^\ns \to \Phi \ , \\
    &f_2  : \calY^{n\ns} \times \calS^\ns \times \Phi \to \calK \ ,
\end{align}
\end{subequations}
which define Alice's key, public message, and Bob's key,
respectively. Namely,
\begin{subequations} \label{equ:oneway_protocol}
\begin{align}
    K &= f_1(X^{n\ns}, S_{ab}^\ns) \ , \\
    \phi &= g(X^{n\ns}, S_{ab}^\ns) \ , \\
    \hatK &= f_2(Y^{n\ns}, S_{ab}^\ns, \phi) \ .
\end{align}
\end{subequations}

\begin{definition}[Achievability]
A secret key rate $R$ is (weakly) achievable if for any $\epsilon>0$,
there is a secret key generation system defined in
\eqref{equ:keygen_system} such that for sufficient large $n$ and
$\ns$,
\begin{align}
  & \Pr(K \neq \hatK) < \epsilon                    \label{equ:reliable_cond} \ , \\
  & \frac{1}{n\ns} I(K; Z^{n\ns}, S_e^\ns, \Phi) < \epsilon   \label{equ:secrecy_cond} \ ,\\
  & \frac{1}{n\ns} H(K) > R - \epsilon              \label{equ:rate_cond} \ .
\end{align}
\end{definition}

Condition \eqref{equ:rate_cond} means the key is almost uniformly
distributed over the set $\calK$. System secrecy is measured in terms
of the mutual information defined in \eqref{equ:secrecy_cond} which
says that the information about the key leaked to eavesdropper is
negligible.  The supremum of achievable secret key rates is called the
\emph{secret key capacity}.

\section{Ergodic Secret Key Capacity}
\label{sec:sparse.ergodic}

For applications that can tolerate longer delays, the key generation
protocol can operate across a large number of independent state
realizations.  In this setting $n$ and $\ns$ can both be arbitrary
large.  The secret key capacity in the Shannon sense is well-defined
and is termed the \emph{ergodic secret key capacity}, $\Cerg$.

\subsection{Ergodic Capacity of SD-DMMS}

The theorems developed by Ahlswede, \Csiszar \cite[Theorem 1]{Ahlswede_Csiszar93} and Maurer
  \cite[Theorem 1,2]{maurer93} can be applied to the ergodic case of the source
model in Figure \ref{fig:sddmms_model} to get the following lemma.

\begin{lemma}\label{lemma:ergodic_capacity}
  \begin{equation}
    \Cerg^- \leq \Cerg \leq \Cerg^+ \ ,
  \end{equation}
  where
  \begin{align}
    \Cerg^- &= I(X;Y|S_{ab}) - I(X;Z,S_e|S_{ab}) + \frac{1}{n} H(S_{ab}|S_e)  \label{equ:ergodic_LB}\\
    \Cerg^+ &= I(X;Y|Z,S_{ab}, S_e) + \frac{1}{n} H(S_{ab}|S_e) \ . \label{equ:ergodic_UB}
  \end{align}
\end{lemma}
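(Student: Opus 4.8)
The plan is to reduce the state-dependent source to an ordinary DMMS by ``blocking'' over one sparsity coherence period and then invoke the classical one-way secret-key bounds. Fix $n$. For the $m$-th coherence block, form the super-letters $\tilde X_m \triangleq (X^n_{(m)}, S_{ab,m})$, $\tilde Y_m \triangleq (Y^n_{(m)}, S_{ab,m})$, $\tilde Z_m \triangleq (Z^n_{(m)}, S_{e,m})$, where $X^n_{(m)}$ denotes the $n$ source samples generated in block $m$ (and similarly for $Y^n_{(m)}, Z^n_{(m)}$). By the memoryless-state assumption \eqref{eq.stateModel} and the conditional-i.i.d.\ generation \eqref{eq.condGenProcess}, the triples $\{(\tilde X_m,\tilde Y_m,\tilde Z_m)\}_{m\ge 1}$ are i.i.d., i.e.\ they constitute an ordinary DMMS in which Alice observes $\tilde X$, Bob observes $\tilde Y$, and Eve observes $\tilde Z$. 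Since the protocol \eqref{equ:keygen_system}--\eqref{equ:oneway_protocol} runs on $\ns$ such super-letters while the key-rate normalization in the achievability definition is by $n\ns$, we have $\Cerg = \frac{1}{n}\,\keycap(\tilde X;\tilde Y\|\tilde Z)$, with $\keycap(\cdot)$ the one-way-discussion secret key capacity of the super-DMMS.

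Next I would apply the single-letter bounds of Ahlswede--Csisz\'ar \cite[Theorem 1]{Ahlswede_Csiszar93} and Maurer \cite[Theorems 1 and 2]{maurer93} to this super-DMMS: the one-way achievability bound (taking the auxiliary variable to be $\tilde X$ itself) gives $\keycap(\tilde X;\tilde Y\|\tilde Z) \ge I(\tilde X;\tilde Y) - I(\tilde X;\tilde Z)$, and the converse gives $\keycap(\tilde X;\tilde Y\|\tilde Z) \le I(\tilde X;\tilde Y\mid\tilde Z)$. It then remains to rewrite the two super-letter quantities in single-letter form. Using that $S_{ab,m}$ is common to $\tilde X_m$ and $\tilde Y_m$, and that conditioned on the state pair the $n$ source samples of a block are i.i.d.,
\begin{align*}
  I(\tilde X;\tilde Y\mid\tilde Z)
  &= H(S_{ab}\mid Z^n, S_e) + n\,I(X;Y\mid S_{ab}, S_e, Z) \\
  &\le H(S_{ab}\mid S_e) + n\,I(X;Y\mid S_{ab}, S_e, Z),
\end{align*}
which upon division by $n$ is exactly $\Cerg^+$. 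For the lower bound, expand $I(\tilde X;\tilde Y) = H(S_{ab}) + n\,I(X;Y\mid S_{ab})$ and $I(\tilde X;\tilde Z) = I(S_{ab};Z^n,S_e) + I(X^n;Z^n,S_e\mid S_{ab})$, and then invoke the two conditional-independence properties inherited from the channel model of Section~\ref{sec:sparse_channel_model} — namely $(X^n,Y^n)\perp S_e\mid S_{ab}$ (Alice and Bob's channel depends on the state only through $S_{ab}$) and $Z^n\perp S_{ab}\mid S_e$ (the marginal law of Eve's fades depends on the state only through $S_e$) — which make the cross terms $I(S_{ab};Z^n\mid S_e)$ and $I(X^n;S_e\mid S_{ab})$ vanish and collapse $I(\tilde X;\tilde Z)$ to $I(S_{ab};S_e) + n\,I(X;Z,S_e\mid S_{ab})$. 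Since $H(S_{ab}) - I(S_{ab};S_e) = H(S_{ab}\mid S_e)$, dividing by $n$ gives precisely $\Cerg^-$.

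The entropy algebra and the appeal to \cite{Ahlswede_Csiszar93, maurer93} are routine; the step needing care is the blocking reduction together with pinning down exactly which conditional independencies are used. One must check that the classical one-way bounds apply verbatim once the state is folded into the per-user super-letter (so that $S_e$, known only to Eve, enters solely as part of $\tilde Z$, while $S_{ab}$, known to both legitimate users, enters symmetrically into $\tilde X$ and $\tilde Y$), and that the two independence properties quoted above genuinely follow from the sparse-channel construction — these are what leave the clean single-letter form. The term $\frac{1}{n}H(S_{ab}\mid S_e)$ should then be read as the secret-key contribution of the state itself, which Alice and Bob share but Eve learns only partially, through $S_e$.
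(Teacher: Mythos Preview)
Your proposal is correct and follows essentially the same route as the paper: block over one sparsity-coherence period to form the super-letters $(\tilde X,\tilde Y,\tilde Z)$, apply the Ahlswede--Csisz\'ar/Maurer one-way bounds, and then reduce to single-letter form via the chain rule, the memoryless property, and the Markov relation $S_{ab}-S_e-Z^n$. One small simplification: the extra independence $(X^n,Y^n)\perp S_e\mid S_{ab}$ you invoke is not actually needed, since the memoryless structure alone already gives $I(X^n;Z^n,S_e\mid S_{ab}) = n\,I(X;Z,S_e\mid S_{ab})$; only $Z^n\perp S_{ab}\mid S_e$ is required, and that is exactly the Markov chain the paper uses.
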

\begin{proof}
The proof is given in Appendix~\ref{sec:proof_ergodic_cap}. $\blacksquare$
\end{proof}

 The important observation about~(\ref{equ:ergodic_LB})
  and~(\ref{equ:ergodic_UB}) is that they both consist of two types of
  terms: mutual information terms and entropy terms.  The latter
  quantifies the amount of uncertainty in the sparsity pattern of the
  main (Alice-to-Bob) channel given Eve's observation $S_e$.  The
  former quantifies the conditional secret key capacity given the
  latter.  The following lemma says that the upper and lower bound
  equal one another when the eavesdropper's observation is degraded.

\begin{corollary}\label{corol:degraded_ergodic_cap}
  For the situation in which the eavesdropper's source is degraded per
  \eqref{equ:degradness_defn}, the ergodic secret key capacity is
  \begin{equation}\label{equ:capacity_sd_dmms}
    \Cerg = I(X;Y|S_{ab}) - I(X;Z,S_e|S_{ab}) + \frac{1}{n} H(S_{ab}|S_e) \ .
  \end{equation}
\end{corollary}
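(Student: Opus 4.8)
The plan is to lean on Lemma~\ref{lemma:ergodic_capacity}, which already sandwiches $\Cerg$ between $\Cerg^-$ and $\Cerg^+$; it therefore suffices to show that the degradedness hypothesis~\eqref{equ:degradness_defn} forces the two bounds to coincide. Since $\Cerg^-$ and $\Cerg^+$ share the identical entropy term $\frac{1}{n}H(S_{ab}|S_e)$, the whole problem collapses to the single mutual-information identity
\begin{equation}\label{equ:mi_collapse}
    I(X;Y|S_{ab}) - I(X;Z,S_e|S_{ab}) = I(X;Y|Z,S_{ab},S_e) .
\end{equation}

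To obtain~\eqref{equ:mi_collapse} I would expand $I(X;Y,Z,S_e|S_{ab})$ by the chain rule in two orders: grouping $(Z,S_e)$ first gives $I(X;Z,S_e|S_{ab}) + I(X;Y|Z,S_e,S_{ab})$, while grouping $Y$ first gives $I(X;Y|S_{ab}) + I(X;Z,S_e|Y,S_{ab})$. Equating the two expansions shows that~\eqref{equ:mi_collapse} is equivalent to $I(X;Z,S_e|Y,S_{ab}) = 0$, i.e., to the Markov chain $X \to (Y,S_{ab}) \to (Z,S_e)$. Thus it only remains to verify that degradedness of Eve's source implies this Markov relation; this is the familiar Ahlswede--Csisz\'ar/Maurer picture in which the upper and lower key-capacity bounds meet.

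The final step is to read the Markov relation off the model. The single-letter joint law is $p(s_{ab},s_e)\,p(x,y,z|s_{ab},s_e)$, and substituting the factorization~\eqref{equ:degradness_defn} gives $p(s_{ab},s_e,x,y,z) = p(s_{ab})\,p(s_e|s_{ab})\,p(x,y|s_{ab})\,p(z|y,s_{ab},s_e)$. Marginalizing out $z$ and $s_e$ yields $p(x,y,s_{ab}) = p(s_{ab})\,p(x,y|s_{ab})$, and dividing gives $p(z,s_e|x,y,s_{ab}) = p(s_e|s_{ab})\,p(z|y,s_{ab},s_e)$, which carries no dependence on $x$. That is exactly the Markov property we need, which closes the proof.

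I do not expect a genuine obstacle here — the statement is essentially a corollary of Lemma~\ref{lemma:ergodic_capacity} plus a standard degraded-source argument. The one place to be careful is the bookkeeping with the two state variables: the degradedness condition~\eqref{equ:degradness_defn} is stated conditioned on \emph{both} $s_{ab}$ and $s_e$, so one must check that the conditioning sets appearing in~\eqref{equ:mi_collapse} (augmenting the ``Eve'' side by $S_e$ and the ``Alice--Bob'' side by $S_{ab}$) are consistent with the joint law induced by the state prior $p(s_{ab},s_e)$. Writing the full joint out explicitly, as above, makes this transparent and rules out a spurious reappearance of $x$ in $p(z,s_e\mid x,y,s_{ab})$.
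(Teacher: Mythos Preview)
Your proposal is correct and is essentially the same as the paper's own proof: both invoke Lemma~\ref{lemma:ergodic_capacity}, reduce the equality $\Cerg^- = \Cerg^+$ to the chain-rule identity~\eqref{equ:mi_collapse}, and close with the Markov relation $X - (Y,S_{ab}) - (Z,S_e)$ implied by degradedness~\eqref{equ:degradness_defn}. The paper's argument is slightly terser (it simply asserts the Markov chain ``since the eavesdropper is degraded''), whereas you spell out the factorization of the joint law explicitly; this extra care is harmless and arguably clearer.
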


\begin{proof}
  It can be verified by examining \eqref{equ:ergodic_UB} that
  \begin{align*}
    \Cerg^+ &= I(X;Y|Z,S_{ab}, S_e) + \frac{1}{n} H(S_{ab}|S_e) \\
            &= I(X;Y,Z,S_e|S_{ab}) - I(X;Z,S_e|S_{ab}) + \frac{1}{n} H(S_{ab}|S_e) \\
            &= I(X;Y|S_{ab}) - I(X;Z,S_e|S_{ab}) + \frac{1}{n} H(S_{ab}|S_e) \\
            &= \Cerg^- \ ,
  \end{align*}
where the third equality is due to the fact that given $S_{ab}$ we
have Markov chain $X-Y-(Z,S_e)$. This holds since the eavesdropper is degraded.
\end{proof}

Note that when the state changes slowly, which is equivalent to
when $n$ is large, $\frac{1}{n} H(S_{ab}|S_e) \to 0$. That is, the
contribution to the secret key capacity due to the sparsity pattern
$S_{ab}$ is very small. As discussed in
Section~\ref{sec:sparse_channel_model} this will be the common
situation.  Thus, in following, we focus on the non-vanishing term of
\eqref{equ:capacity_sd_dmms}, which we denote as $\Rerg$, \ie,
$\Rerg = I(X;Y|S_{ab}) - I(X;Z,S_e|S_{ab})$.

\subsection{Ergodic secret key rate of sparse wireless channel}
\label{sec:ergodic_wirless}

  We now first apply Lemma~\ref{lemma:ergodic_capacity} to the sparse
  channel model specified in \ref{sec:sparse_channel_model}.  In
  Section~\ref{sec.calcMI} we first examine the expressions for mutual
  information $I(\bmX;\bmY|\SPmain)$ and $I(\bmX;\bmZ,
  \SetSP_e|\SPmain)$ for the vector channel described by
  \eqref{equ:sparse_channel_output} and \eqref{equ:sparse_Eve_output}.
  Then, in Section~\ref{sec.degraded} we identify conditions under
  which the eavesdropper's observation is degraded.  Finally, in
  Sections~\ref{sec.ergodicAch} and~\ref{sec.wideband} we focus in on
  the randomness due to the sparity patterns and analyze the wideband
  limit.

%%%%%%%%%%%%%%%%%%%%%%%%%%
\subsubsection{Mutual information} \label{sec.calcMI}

Define $Q_\ell$ to be the product $S_{ab,\ell} \times S_{e,\ell}$ so
$Q_\ell \in \{0,1\}$. Thus $Q_\ell=1$ if and only if the support (the
sparsity pattern) of $\hmain$ and of $\heve$ are both non-zero in the
$\ell$-th delay bin. Also define two functions:
\begin{subequations} \label{equ:muInfo_formula}
\begin{align}
        &\Iab(\snr_a,\snr_b) = \log \left( \frac{ (1+\snr_a)(1+\snr_b) }{1+ \snr_a + \snr_b} \right) \ , \\
        &\Iae(\snr_a, \snr_e) = \log \left( \frac{ (1+\snr_a)(1+\snr_e) }{1+ \snr_a \snr_e (1-|\eta|^2) +\snr_a + \snr_e} \right) \ . \label{equ:eve_Iae}
\end{align}
\end{subequations}
and $\snr_a = \frac{P}{\nvar_a}$, $\snr_b = \frac{P}{\nvar_b}$ and
$\snr_e = \frac{P}{\nvar_e}$.  We show in Appendix
\ref{sec:sparse.Proof_mutual_info} that
\begin{subequations} \label{equ:vector_muInfo}
\begin{align}
  I(\bmX;\bmY|\SPmain)
    &= E\left[\sum_{\ell=1}^{\maxL} S_{ab,\ell} \Iab(\varh_\ell \snr_a, \varh_\ell \snr_b) \right] \label{equ:Ixy_givenS}\\
  I(\bmX;\bmZ, \SetSP_e|\SPmain)
    &= E\left[ \sum_{\ell=1}^{\maxL} Q_\ell \Iae(\varh_\ell \snr_a, \varhe_\ell \snr_e) \right]\label{equ:Ixz_givenS}
\end{align}
\end{subequations}
In the above expressions, the expectation is taken over the random
sparsity patterns $\SPmain$ and $\SetSP_e$.
% and channel realizations $\bmX$, $\bmY$, and $\bmZ$.
Note the factor $Q_\ell$ in
\eqref{equ:Ixz_givenS}.  When $S_{ab, \ell} = 1$ but $S_{e, \ell} = 0$
the eavesdropper has no measurement of that channel coefficient
($Q_{\ell} = 0$).  Thus, the eavesdropper has no observation of that
common randomness and the negative mutual information term in
\eqref{equ:ergodic_LB} is zero.

It is clear in \eqref{equ:vector_muInfo} that channel sparsity
patterns ($\SPmain$ and $\SetSP_e$) effect the mutual information via
the channel DoF (the number of terms in the summation) and the
correlation coefficient $\eta$ effects the information leakage via
$\Iae(\cdot)$ in each delay bin observed by the eavesdropper.

%%%%%%%%%%%%%%%%%%%%%%%%%%%%%%%%%
\subsubsection{Degraded condition}\label{sec.degraded}

Because the Eve's channel is correlated to the main channel, she may
get a good estimation of $\hmain$ if she has a higher SNR than Alice
and Bob.  To guarantee the positivity of the secret key rate, we need
to characterize the conditions under which the eavesdropper has a
worse observation than Alice and Bob.  To develop such conditions we
first consider a delay bin
where $Q_\ell=1$\footnote{In subspaces such that $Q_\ell = 0$
either $S_{ab, \ell} = 0$ or Eve has no observation of $H_{ab,\ell}$,
so there is no need to consider those subspaces.}.
Project the channel outputs onto $\bd_\ell$, the $\ell$-th column of
$\bD$, we get
\begin{subequations} \label{equ:scalar_xy}
\begin{align}
   X_\ell &= \bd_\ell^H \bmX \doteq P H_\ell + W_{1,\ell} \\
   Y_\ell &= \bd_\ell^H \bmY \doteq P H_\ell + W_{2,\ell} \ .
\end{align}
\end{subequations}
Because the sounding signal is an (asymptotically) white sequence,
$X_\ell$ (and $Y_\ell$) are sufficient statistic for estimating
$H_\ell$. The noise $W_{1,\ell}$ (resp. $W_{2,\ell}$) is a zero mean
complex Gaussian with variance $P\nvar_a$ (resp. $P\nvar_b$).
Similarly, Eve's sufficient static is
\begin{align}
   Z_\ell &= \bd_\ell^H \bmZ \doteq P H_{e,\ell} + W_{3,\ell} \nnum
   \\ &\equiv P \left( \frac{\upsilon_\ell}{\nu_\ell} \eta H_\ell +
   \sqrt{1 - |\eta|^2} H'_\ell \right) + W_{3,\ell}
   \ . \label{equ:eve_equiv_rx}
\end{align}
Because of $\eta(H_\ell, H_{e,\ell}) = \eta$, we have equivalently
written $H_{e,\ell}$ as a sum of two terms.  The first term is a
scaled version of $H_\ell$.  The second, $H'_\ell$, is a
$\randcn{0}{\varhe_\ell}$ random variable that is independent of
$H_\ell$.  We see from \eqref{equ:eve_equiv_rx} that Eve's observation
$Z_\ell$ contains two types of noise.  The first is the receiver noise
$W_{3,\ell}$.  The second is due to the uncorrelated $H'_\ell$.

Eve's observation $Z_\ell$ will be a degraded version of $Y_\ell$ if
Eve has a smaller SNR than Bob.  This occurs if
\begin{equation}
    \frac{ \varh_\ell P }{\nvar_b} > \frac{|\eta|^2 \varhe_\ell
      P}{(1-|\eta|^2)\varhe_\ell P + \nvar_e} \ .
\end{equation}
Otherwise, $Y_\ell$ is a degraded version of $Z_\ell$. If the sounding
signal power is small and Eve has a suitably smaller noise
variance $\nvar_e$, in particular, when
\begin{equation}
    (1-|\eta|^2) \varhe_\ell P < |\eta|^2
  \frac{\varhe_\ell}{\varh_\ell} \nvar_b - \nvar_e \ ,
\end{equation}
then Bob's output is noisier and no secret key can be extracted at a
positive rate from the $\ell$-th delay bin. This is because when
$P$ is small, Eve's independent noise (due to $H'_\ell$) is decreased.
It is observed in \cite{chou_isit10} that there is a cutoff SNR below
which the secret key capacity is zero.  If $\varh_\ell = \varhe_\ell$
and all the users (Alice, Bob and Eve) are with the same SNR, \ie,
$\nvar_a=\nvar_b=\nvar_e = \nvar$, the secret key capacity will be
positive because Eve has an extra noise (due to the uncorrelated
$H'_\ell$).

%%%%%%%%%%%%%%%%%%%%%%%%%%%%%
\subsubsection{Achievable secret key rate} \label{sec.ergodicAch}

In order to see the effect of channel sparsity when the bandwidth is
large (but finite), we focus on the equal-SNR case and consider a
uniform delay profile, thus, having a degraded eavesdropper.
Define the random number of non-zero channel
coefficients in the main Alice-to-Bob and in Eve's channel to be,
respectively,
\begin{subequations}\label{equ:Num_nonzero_bin}
  \begin{align}
    &B_{ab} = \sum_{\ell=1}^{\maxL}  S_{ab,\ell} \ , \\
    &B_e = \sum_{\ell=1}^{\maxL} S_{e,\ell} \ ,
  \end{align}
\end{subequations}
Note that $B_{ab}$ and $B_e$ are binomial $\randBino{\maxL}{\rho}$
distributed random variables.  Consider a uniform delay profile, \ie,
$\varh_\ell = \frac{1}{B_{ab}}$ for all $\ell$ for which
$S_{ab,\ell}=1$; similarly, $\varhe_\ell = \frac{1}{B_e}$ for all
$\ell$ for which $S_{e,\ell}=1$.

Let $\Iinst(P)$ be the instantaneous key rate $I(\bmX;\bmY|\SPmain) -
I(\bmX;\bmZ, \SetSP_e|\SPmain)$ for fixed $\SPmain$ and $\SetSP_e$. \ie,
\begin{equation} \label{equ:instRate_formula}
  \Iinst(\snr) = B_{ab}  \Iab\left( \frac{\snr}{B_{ab}}, \frac{\snr}{B_{ab}} \right)
                -B_{q} \Iae\left( \frac{\snr}{B_{ab}}, \frac{\snr}{B_e} \right)  \ ,
\end{equation}
where $\snr \triangleq \frac{P}{\nvar}$ and
\begin{equation}\label{equ:Num_overlap_bin}
  B_{q} = \sum_{\ell=1}^{\maxL} Q_{\ell} \quad \mbox{(the number of overlap delay bins)} \ .
\end{equation}
From \eqref{equ:vector_muInfo} and
Corollary \ref{corol:degraded_ergodic_cap} the achievable secret key
rate is
\begin{equation}
  \Ierg(\snr) = E\left[ \Iinst(\snr) \right] \ .
\end{equation}

As we will see later in \ref{sec.wideband}, $\Iinst(\snr)$ is convex in low
SNR (and so is $\Ierg(\snr)$). Thus, a uniform sounding strategy using
a sounding signal with constant power $P$ is not optimal. Let $\calP$
denote all sounding policies that satisfy average power constraint
$E[\bd^H\bd]\leq P$, we can achieve
\begin{equation} \label{equ:defn_Rerg}
  \Rerg(\snr) = \max_{\calP} \Ierg(\snr) \ .
\end{equation}
Note that from Corollary \ref{corol:degraded_ergodic_cap}
and the discussion thereafter, $\Rerg(P)$ approaches $\Cerg(P)$
from the below as $n \to \infty$.

\begin{theorem}[An on-off sounding achieves capacity] \label{thm:sparse_skrate}
    \begin{equation} \label{equ:ergodic_skrate_express}
    \Rerg(\snr) = \max_{0< \tsf \leq 1} \tsf \Ierg\left( \frac{\snr}{\tsf} \right) \ .
    \end{equation}
\end{theorem}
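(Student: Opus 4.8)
I would prove the two inequalities $\Rerg(\snr)\ge\max_{0<\tsf\le1}\tsf\Ierg(\snr/\tsf)$ and $\Rerg(\snr)\le\max_{0<\tsf\le1}\tsf\Ierg(\snr/\tsf)$ separately. The first is a direct ``on--off'' time-sharing construction. Fix $\tsf\in(0,1]$ and let the sounding policy transmit an asymptotically white sequence of power $P/\tsf$ in a fraction $\tsf$ of the sounding intervals and be silent ($\bd=\mathbf 0$) in the remaining fraction $1-\tsf$; its average power is $\tsf(P/\tsf)+(1-\tsf)0=P$, so it belongs to $\calP$. A silent interval produces pure noise, independent of $\SPmain$ and $\SetSP_e$, and hence contributes zero to every term of \eqref{equ:vector_muInfo} (equivalently, $\Ierg$ vanishes at zero power since $\Iab(0,0)=\Iae(0,0)=0$). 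Because soundings in different coherence periods are independent, the active intervals form an i.i.d.\ sub-stream of the SD-DMMS, so by Corollary~\ref{corol:degraded_ergodic_cap} a key can be accrued at rate $\Ierg(\snr/\tsf)$ per active interval as the number of active intervals grows; averaged over the horizon this is an achievable ergodic rate of $\tsf\Ierg(\snr/\tsf)$. Maximizing over $\tsf$ gives ``$\ge$'' (the maximum is attained because $\tsf\mapsto\tsf\Ierg(\snr/\tsf)$ extends continuously to $\tsf=0$, using that $\Ierg$ grows only logarithmically in SNR).

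For the converse, I would first reduce an arbitrary policy to a power distribution. Under the white-sequence assumption \eqref{equ:white_sequence_assumption} the mutual informations in \eqref{equ:vector_muInfo} depend on the sounding sequence only through its power $\bd^H\bd$, so a policy in $\calP$ amounts to choosing a nonnegative random power $\mathsf P$ with $E[\mathsf P]\le P$, and by the law-of-large-numbers behavior of the memoryless SD-DMMS it achieves ergodic rate $E[\Ierg(\mathsf P/\nvar)]$. Writing $\phi(\cdot):=\Ierg(\cdot/\nvar)$, this yields $\Rerg(\snr)=\sup\{E[\phi(\mathsf P)]:\mathsf P\ge 0,\ E[\mathsf P]\le P\}$. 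I would then identify this supremum with $\widehat\phi(P)$, the value at $P$ of the upper concave envelope of $\phi$: ``$\ge$'' is the achievability argument applied to the two-point distribution realizing a chord of $\phi$, and ``$\le$'' is Jensen's inequality $E[\phi(\mathsf P)]\le E[\widehat\phi(\mathsf P)]\le\widehat\phi(E[\mathsf P])\le\widehat\phi(P)$, where the last step uses that $\widehat\phi$ is nondecreasing on $[0,\infty)$ (a concave function that is nonnegative and vanishes at $0$---properties inherited from $\phi$---cannot decrease).

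It remains to compute $\widehat\phi(P)$, and here I would invoke the structural fact, established in Section~\ref{sec.wideband}, that $\Ierg$ is convex at small SNR and concave at large SNR with $\Ierg(0)=0$; hence $\phi$ has a single inflection and $\widehat\phi$ consists of the chord from the origin tangent to the graph of $\phi$ at some power $P^\star$, followed by $\phi$ itself on $[P^\star,\infty)$. Therefore $\widehat\phi(P)=\phi(P)$ when $P\ge P^\star$ and $\widehat\phi(P)=(P/P^\star)\phi(P^\star)$ when $P<P^\star$; in both cases $\widehat\phi(P)=\tsf\phi(P/\tsf)$ for a suitable $\tsf\in(0,1]$ (namely $\tsf=1$ or $\tsf=P/P^\star$). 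Combined with the trivial bound $\tsf\phi(P/\tsf)\le\tsf\widehat\phi(P/\tsf)\le\widehat\phi(P)$ for every $\tsf\in(0,1]$ (concavity of $\widehat\phi$ together with $\widehat\phi(0)=0$), this gives $\Rerg(\snr)=\widehat\phi(P)=\max_{0<\tsf\le1}\tsf\phi(P/\tsf)=\max_{0<\tsf\le1}\tsf\Ierg(\snr/\tsf)$. The main obstacle is precisely this last step: the identification of the concave envelope with an \emph{origin-anchored} (on--off) chord relies essentially on the convex-then-concave shape of $\Ierg$---without it a chord between two positive power levels could exceed every on--off value---so the bulk of the work is in verifying that shape (which is why the proof leans on the analysis of Section~\ref{sec.wideband}).
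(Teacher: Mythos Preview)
Your proposal is correct and shares the paper's high-level strategy: on--off achievability for the lower bound, and Jensen's inequality plus monotonicity for the converse. The packaging differs. The paper defines $\bar I_{\mathrm{er}}(\snr):=\max_{0<\tsf\le1}\tsf\,\Ierg(\snr/\tsf)$, \emph{asserts} that $\bar I_{\mathrm{er}}$ is concave and non-decreasing, and then bounds any policy by $\sum_s p(s)\Ierg(P_s/\nvar)\le\sum_s p(s)\bar I_{\mathrm{er}}(P_s/\nvar)\le\bar I_{\mathrm{er}}(P/\nvar)$ directly. You instead pass through the upper concave envelope $\widehat\phi$ of $\phi=\Ierg(\cdot/\nvar)$, get $\Rerg=\widehat\phi(P)$ by Jensen (where concavity is automatic), and only then identify $\widehat\phi$ with the on--off maximum using the convex-then-concave shape of $\Ierg$. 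These are really the same argument: the paper's asserted concavity of $\bar I_{\mathrm{er}}$ is \emph{exactly} the statement that the on--off maximum coincides with the concave envelope, and it is not true for an arbitrary non-decreasing $\Ierg$ with $\Ierg(0)=0$---it needs the single-inflection structure you isolate. So what you lose in brevity you gain in transparency: you make explicit the structural hypothesis on $\Ierg$ (from Section~\ref{sec.wideband}) that the paper's one-line ``note that $\bar I_{\mathrm{er}}$ is concave'' is silently relying on.
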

\begin{proof}
The proof is provided in Appendix
\ref{sec:Proof_sparse_skrate}.
\end{proof}

The physical interpretation of the auxiliary variable $\tsf$ is that
key rate $\tsf \Ierg\left( \frac{\snr}{\tsf} \right)$ can be
achieved by an on-off sounding strategy that sounds the channel during
$\tsf$ ($0<\tsf \leq1$) fraction of the time,
each with power $\frac{P}{\tsf}$, and does not sound the channel (i.e., is turned off)
during the rest of the time (\ie, a time-sharing scheme).
Theorem \ref{thm:sparse_skrate} says that the ergodic secret key capacity can be achieved by
an $\tsf^*$ on-off sounding strategy where $\tsf^*$ is the argumment maximizing
\eqref{equ:ergodic_skrate_express}.
As we will discuss in \ref{sec.wideband}, an optimal on-off signal is sparse in time
(\ie, $\tsf^* \to 0$) in a low SNR ($\snr \to 0$) and is dense (\ie, $\tsf^* \to 1$) in a high SNR.

\iffalse
\tc{\bf(Tzu-Han: this sub-section needs a wrap-up discussion and
  interpretation.  What is the use of tehse results, why are they
  important, how will they be used going foward?)}
\fi

%%%%%%%%%%%%%%%%%%
\subsubsection{Wideband regime} \label{sec.wideband}

One way to increase the secret key capacity is to increase the
bandwidth $W$ of the wireless channel.  However, the channel DoF do
not grow linearly in $W$. To see how $W$ effects the secret key rate,
we examine $\Rerg(P)$ from \eqref{equ:ergodic_skrate_express} in the
wideband regime.

In this case, each channel DoF is sounded at a low SNR. At low
SNR we can approximate \eqref{equ:muInfo_formula} as
\begin{subequations}\label{equ:lowSNR_muInfo_approx}
\begin{align}
    &\Iab(x, x)  \approx \frac{x^2}{\ln 2}  \ , \\
    &\Iae(x, y)  \approx \frac{|\eta|^2 xy}{\ln 2}  \ .
\end{align}
\end{subequations}
for $x$ and $y$ small.
The ergodic key rate
\begin{align}
    \Ierg(\snr) &\approx \frac{1}{\ln 2}
        E\left[ B_{ab} \left( \frac{\snr}{B_{ab}} \right)^2 - B_q |\eta|^2 \frac{\snr}{B_{ab}}\frac{\snr}{B_e} \right]  \nnum \\
        &= \frac{\snr^2}{\ln 2} E\left[ \frac{1}{B_{ab}} - |\eta|^2 \frac{B_q}{B_{ab}}\frac{1}{B_e} \right] \nnum \\
        &\stackrel{(a)}{\approx} \frac{\snr^2}{\ln 2} \frac{(1- \theta|\eta|^2)}{L}
        = \frac{\snr^2}{\ln 2} \frac{(1- \theta |\eta|^2)}{(\maxdelay W)^\delta} \ . \label{equ:wideband_approx_ergodic}
\end{align}
The approximation $(a)$ is accurate when $L\gg1$ \cite[eq.(5)]{Grab54}.
The right hand side of \eqref{equ:wideband_approx_ergodic} is a
quadratic  function of $\snr$, thus $\Ierg(\snr)$ is convex in low SNR.

\begin{figure}
  \centering
  \includegraphics[width=0.55\textwidth]{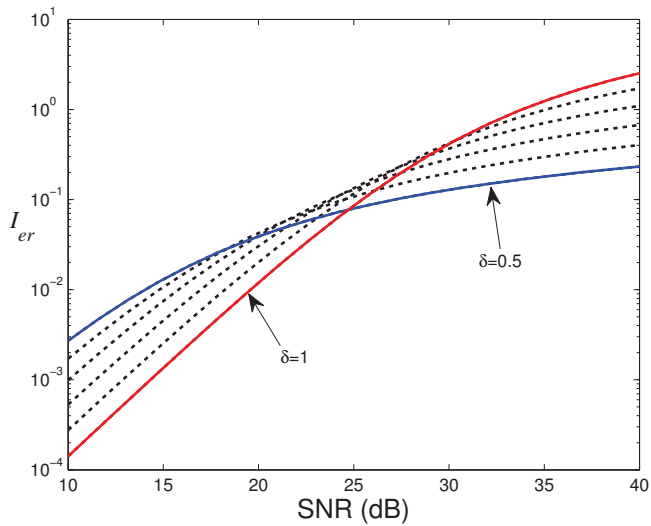}\\
  \caption{Achievable secret key rate $\Ierg(\snr)$ plotted versus SNR
    ($\snr$). The bandwidth is $W=$100MHz, the maximum delay spread
    $\maxdelay =10\mu$s, the conditional probability of overlap in
    $\bmS_{ab}$ and $\bmS_e$ is $\theta=0.5$ and the correlation
  between channel coefficients is $\eta=0.1$. The sparsity parameter
  $\delta \in [0.5,1]$.}
  \label{fig:Ierg_vs_snr}
\end{figure}

Figure \ref{fig:Ierg_vs_snr} plots $\Ierg(\snr)$ versus $\snr$, for a
bandwidth of $W=100$MHz, $\maxdelay=10\mu$s, and for various values of
the sparsity parameter $\delta \in [0.5, 1]$. We see that a sparser
channel (small $\delta$) achieves a higher key rate at low SNRs. We
can also observe this from \eqref{equ:wideband_approx_ergodic}.
%  and occurs because in this power-limited regime the power in a sparser channel is concentrated on a few DoF.
According to Theorem \ref{thm:sparse_skrate} and notice that \eqref{equ:wideband_approx_ergodic}
is quadratic in $\snr$, we need a sparser signal in time (an on-off signal) to
get a higher key rate. In other words, in the wideband (power-limited) regime, fewer DoF (either in
channel or in time domain) can achieve a higher key rate.
This occurs because the key generation problem is a combined channel sounding
and channel coding problem.  By focusing energy on fewer DoF we raise their SNR,
enabling key generation to occur at a higher rate.
In contrast, a richer channel (large $\delta$) results in a higher key rate at a high
SNR since that is a DoF-limited (and not a power-limited) regime.

\begin{figure}
  \centering \subfigure[]
             {\includegraphics[width=0.55\textwidth]{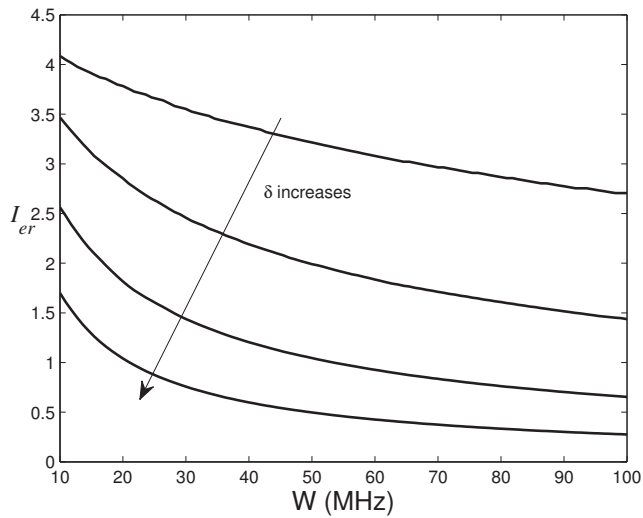} \label{fig:Ierg_vs_w.1}
             } \subfigure[]
             {\includegraphics[width=0.55\textwidth]{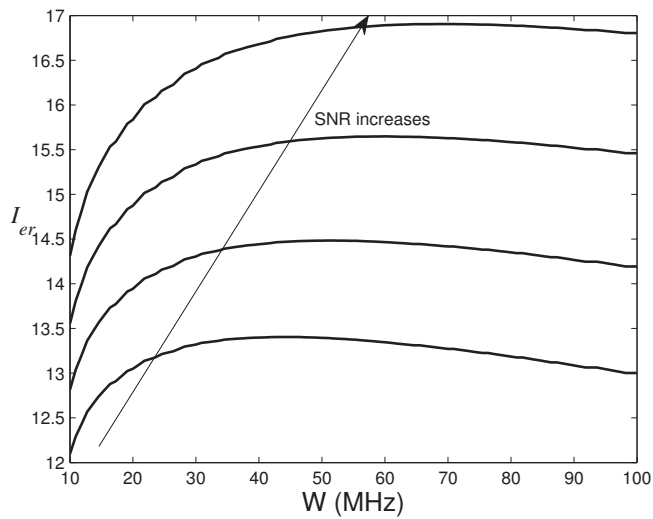} \label{fig:Ierg_vs_w.2}
             }
  \caption{Achievable secret key rate $\Ierg(\snr)$ for SNR fixed at
            $\snr = 10dB$ plotted versus bandwidth $W$.  In
            subfigure \subref{fig:Ierg_vs_w.1} the tradeoff is
            plotted for for values of the sparsity parameter $\delta \in [0.5,1)$.
            In subfigure \subref{fig:Ierg_vs_w.2} the sparsity
            parameter is fixed at $\delta=0.5$ and the tradeoff is plotted for
            four SNRs, $\gamma~(dB) \in [15,16]$.}
  \label{fig:Ierg_vs_w}
\end{figure}

Figure \ref{fig:Ierg_vs_w.1} and \ref{fig:Ierg_vs_w.2} plots $\Ierg$
as a function of $W$.  This provides another view of the tradeoff
between power and DoF. First, let $\snr$ be fixed at $10$dB. Then,
Figure \ref{fig:Ierg_vs_w.1} plots $\Ierg$ for different values of
channel sparsity $\delta$ in the range $[0.5,1)$. In the wideband
(low-SNR) regime, larger $\delta$ results in a smaller key rate.  In
Figure \ref{fig:Ierg_vs_w.2}, $\delta=0.5$ is fixed and $\snr$ is
varied from 10dB to 30dB. We see that for each SNR, there is a unique
$W^*$ that achieves the highest key rate.

\section{Secrecy Outage}
\label{sec:sparse.outage}

In contrast to Section~\ref{sec:sparse.ergodic} when an application
has a stringent delay requirement or when the state (i.e., the
sparsity pattern) changes so slowly that it is roughly constant during
the secret key generation process, the secret key capacity in th
Shannon sense is not well-defined.  To study this setting, in this
section we set $\ns=1$ while allowing $n$ to be arbitrary large.
Since users only know their state but not Eve's state, they cannot
adapt the key generation rate to Eve's state. Thus, satisfying the
secrecy condition \eqref{equ:secrecy_cond} can be problematic.  In
this section, we consider an ``outage'' setting with a degraded
eavesdropper.  For any ($\ns=1$) realization $(S_{ab},S_e) =
(s_{ab},s_e)$, we say that a secrecy outage occurs if
\begin{equation}\label{equ:defn_outage_event}
    \frac{1}{n} I(K; Z^{n}, \Phi| s_{ab}, s_e) > \Reve
\end{equation}
for some $\Reve>0$. Namely, there is a non-vanishing information rate
leaked to Eve.  Let
\begin{equation}\label{equ:instRate_given_state}
    \Cs(s_{ab},s_e) = I(X;Y|s_{ab}) - I(X;Z|s_{ab},s_e)
\end{equation}
be the conditional secret key capacity for state $(s_{ab},s_e)$.
Theorem \ref{thm:info_leak} shows that the event \\
$R > \Cs(s_{ab},s_e)$
is a necessary and sufficient condition for the outage event
\eqref{equ:defn_outage_event}.

\begin{theorem} \label{thm:info_leak}
    For any rate-$R$ secret key generation systems for which
    Bob can reliability \\
    recover $X^n$ (\ie, $\Pr(X^n \neq f_2(Y^n,
    s_{ab}, \phi)) \to 0$ for some $f_2(\cdot)$), and let \\
    $\Reve(s_{ab},s_e) = R - \Cs(s_{ab},s_e) > 0$, then
    \begin{enumerate}
      \item[(i)] the information leaked to Eve is lower bounded as
        \begin{equation} \label{equ:info_leak_LB}
            \frac{1}{n} I(K; Z^{n}, \Phi| s_{ab}, s_e) \geq \Reve(s_{ab},s_e) -
            2\epsilon \ ,
        \end{equation}
\end{enumerate}
\quad and
\begin{enumerate}
      \item[(ii)] there exist a coding scheme (\cf,
        \eqref{equ:oneway_protocol}) that satisfies
        \eqref{equ:reliable_cond}, \eqref{equ:rate_cond} and
        \begin{equation} \label{equ:info_leak_UB}
            \frac{1}{n} I(K; Z^{n}, \Phi| s_{ab}, s_e) \leq \Reve(s_{ab},s_e) + 2\epsilon \ .
        \end{equation}
    \end{enumerate}
\end{theorem}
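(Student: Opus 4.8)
Below is how I would go about proving Theorem~\ref{thm:info_leak}. Throughout I keep the realization $(s_{ab},s_e)$ fixed and condition every entropy on it. Two structural facts from Section~\ref{sec:sd_dms} are used repeatedly: with $\ns=1$ the samples $(X_i,Y_i,Z_i)$ are i.i.d.\ given $(s_{ab},s_e)$, and the degradedness \eqref{equ:degradness_defn} makes $X^n-Y^n-Z^n$ a Markov chain given $(s_{ab},s_e)$; moreover, rewriting \eqref{equ:instRate_given_state} with $I(X;Y\mid s_{ab})=H(X\mid s_{ab})-H(X\mid Y,s_{ab})$ and $I(X;Z\mid s_{ab},s_e)=H(X\mid s_{ab},s_e)-H(X\mid Z,s_{ab},s_e)$ and using $X\perp S_e\mid S_{ab}$ gives the convenient form $\Cs(s_{ab},s_e)=H(X\mid Z,s_{ab},s_e)-H(X\mid Y,s_{ab})\ge 0$, the nonnegativity coming from data processing.

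\textit{Part (i).} The plan is a single chain of entropy inequalities. I would begin with $\tfrac1n I(K;Z^n,\Phi\mid s_{ab},s_e)=\tfrac1n H(K\mid s_{ab},s_e)-\tfrac1n H(K\mid Z^n,\Phi,s_{ab},s_e)$ and lower bound the first term by $R-\epsilon$ via the (conditional) rate requirement \eqref{equ:rate_cond}. For the second term, since $K=f_1(X^n,s_{ab})$ and $\Phi=g(X^n,s_{ab})$ are both functions of $(X^n,s_{ab})$, I get $H(K\mid Z^n,\Phi,s_{ab},s_e)\le H(X^n\mid Z^n,\Phi,s_{ab},s_e)=H(X^n\mid Z^n,s_{ab},s_e)-H(\Phi\mid Z^n,s_{ab},s_e)=nH(X\mid Z,s_{ab},s_e)-H(\Phi\mid Z^n,s_{ab},s_e)$, the last step by the i.i.d.\ structure. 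It then remains to show $H(\Phi\mid Z^n,s_{ab},s_e)\ge nH(X\mid Y,s_{ab})-n\epsilon_n$ with $\epsilon_n\to0$, which I would do in two moves: the Markov chain $\Phi-Y^n-Z^n$ (given $s_{ab},s_e$) gives $H(\Phi\mid Z^n,s_{ab},s_e)\ge H(\Phi\mid Y^n,Z^n,s_{ab},s_e)=H(\Phi\mid Y^n,s_{ab})$, and the hypothesis that Bob recovers $X^n$ from $(Y^n,s_{ab},\Phi)$ gives, via Fano, $H(X^n\mid Y^n,\Phi,s_{ab})\le n\epsilon_n$, hence $H(\Phi\mid Y^n,s_{ab})=I(\Phi;X^n\mid Y^n,s_{ab})\ge nH(X\mid Y,s_{ab})-n\epsilon_n$. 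Assembling the pieces yields $H(K\mid Z^n,\Phi,s_{ab},s_e)\le n\Cs(s_{ab},s_e)+n\epsilon_n$, and therefore $\tfrac1n I(K;Z^n,\Phi\mid s_{ab},s_e)\ge \Reve(s_{ab},s_e)-\epsilon-\epsilon_n\ge\Reve(s_{ab},s_e)-2\epsilon$ for $n$ large, which is \eqref{equ:info_leak_LB}.

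\textit{Part (ii).} Here the idea is to take a capacity-achieving secret-key protocol and deliberately pad its key with public coin flips. I would invoke the classical one-way source-model achievability of Ahlswede--Csisz\'{a}r and Maurer \cite{Ahlswede_Csiszar93,maurer93} (the degraded case, \cf\ Corollary~\ref{corol:degraded_ergodic_cap} with the state term absent) for the DMMS $p(x,y,z\mid s_{ab},s_e)$: for every $\delta>0$ and $n$ large there is a protocol of the form \eqref{equ:oneway_protocol} producing a key $K_0$ at rate $\Cs(s_{ab},s_e)-\delta$ and a message $\Phi_0$ with $\Pr(K_0\ne\hatK_0)\to0$, $\tfrac1n H(K_0)\ge\Cs(s_{ab},s_e)-2\delta$, and $\tfrac1n I(K_0;Z^n,\Phi_0\mid s_{ab},s_e)\to0$. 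I would then let Alice draw a uniform string $U$ of $\lceil n(\Reve(s_{ab},s_e)+\delta)\rceil$ bits independent of everything else (local randomness at the encoder is harmless; alternatively $U$ can be extracted from spare entropy of $X^n$ when $R<I(X;Y\mid s_{ab})$), append $U$ to the public message, and set $K=(K_0,U)$, $\Phi=(\Phi_0,U)$, $\hatK=(\hatK_0,U)$. Reliability \eqref{equ:reliable_cond} and the rate condition \eqref{equ:rate_cond} are immediate ($\hatK=K$ whenever $\hatK_0=K_0$, and $\tfrac1n H(K)=\tfrac1n H(K_0)+\tfrac1n H(U)\ge R-\delta$). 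For the leakage, independence of $U$ collapses $I(K;Z^n,\Phi\mid s_{ab},s_e)$ to $I(K_0;Z^n,\Phi_0\mid s_{ab},s_e)+H(U)\le o(n)+n(\Reve(s_{ab},s_e)+\delta)+1$, so $\tfrac1n I(K;Z^n,\Phi\mid s_{ab},s_e)\le\Reve(s_{ab},s_e)+2\epsilon$ once $\delta\le\epsilon/2$ and $n$ is large, which is \eqref{equ:info_leak_UB}.

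The step I expect to be the crux is the bound $H(\Phi\mid Z^n,s_{ab},s_e)\ge nH(X\mid Y,s_{ab})-n\epsilon_n$ in part (i): it is the one place where degradedness (to replace Eve's $Z^n$ by Bob's more informative $Y^n$) and the hypothesis that Bob reconstructs the \emph{entire} source $X^n$ (which forces the public message to carry essentially the Slepian--Wolf rate $H(X\mid Y,s_{ab})$) must be used together, and it is precisely this interplay that pins the leakage down to $R-\Cs$. The rest is routine $\epsilon$-bookkeeping and appeals to standard source-coding/secret-key results.
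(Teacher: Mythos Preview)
Your Part~(i) is essentially the paper's argument, just regrouped: both use the rate condition for $H(K)$, the fact that $(K,\Phi)$ is a function of $(X^n,s_{ab})$, degradedness to swap $Z^n$ for $Y^n$ in the $\Phi$-entropy, and Fano on Bob's recovery of $X^n$. No issue there.

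Your Part~(ii) is correct but takes a genuinely different route. The paper builds a single random-binning scheme: $\Phi$ is a Slepian--Wolf bin index at rate $H(X\!\mid\!Y,s_{ab})+\epsilon$ and $K$ is a second bin index at rate $R$; the key lemma (Lemma~\ref{lemma:exist_scheme}) is an equivocation statement that Eve, given $(K,\Phi,Z^n)$, can almost decode $X^n$, from which $H(K\!\mid\!\Phi,Z^n)\ge n\Cs-2n\epsilon$ follows directly. You instead reduce to the classical Ahlswede--Csisz\'ar/Maurer achievability at rate $\Cs-\delta$ and then pad the key with a public uniform string $U$ of rate $\Reve+\delta$. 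Your argument is cleaner and avoids reproving equivocation, but note two things. First, the protocol in \eqref{equ:keygen_system}--\eqref{equ:oneway_protocol} uses \emph{deterministic} functions of $(X^n,s_{ab})$, so external randomness $U$ is outside the stated model; your parenthetical fix (extract $U$ from spare entropy of $X^n$) works only when $R<H(X\!\mid\! s_{ab})$, which is not assumed, so you should say this explicitly or allow randomized encoders. Second, and more importantly for the paper's purposes, your scheme depends on $s_e$ through $\Cs(s_{ab},s_e)$ (both the rate of $K_0$ and the length of $U$ are functions of $s_e$), whereas the paper's binning scheme depends only on $s_{ab}$. Since in the outage setting Alice and Bob do not know $s_e$ and a \emph{single} scheme must be assessed over the random $s_e$, the paper's construction is what is actually needed downstream; your construction proves the theorem as literally stated (for a fixed pair $(s_{ab},s_e)$) but does not by itself yield the $s_e$-oblivious scheme underlying the outage analysis.
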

\begin{proof}
The proof can be found in Appendix \ref{sec:sparse.Proof_info_leakage}.
\end{proof}

%%%%%%%%%%%%%
\subsection{Wideband sparse channel}

In the reciprocal wireless channel case, we know from
\ref{sec:ergodic_wirless} that $\Cs(S_{ab},S_e) = \Iinst(\snr)$ given in
\eqref{equ:instRate_formula}.  Using the approximations from
\eqref{equ:lowSNR_muInfo_approx}, we have in the wideband regime that
\begin{equation}
\label{equ:inst_skcap}
    \Cs(\SPmain,\SetSP_e) \approx \frac{\snr^2}{\ln 2} \left(
    \frac{1}{B_{ab}} - |\eta|^2 \frac{B_q}{B_{ab}}\frac{1}{B_e} \right) \ .
\end{equation}
Recall that $B_{ab}$ (resp. $B_e$) is the weight of vector $\SPmain$
(resp. $\SetSP_e$) (\cf, \eqref{equ:Num_nonzero_bin}) and $B_q$
is the weight of support common to $\SPmain$ and $\SetSP_e$ (\cf,
\eqref{equ:Num_overlap_bin}).  Also note that $B_{ab}$, $B_e$, $B_q$
are random variables so that the overall quantity in
\eqref{equ:inst_skcap} is a random variable.  Unfortunately, there is
no simple expression for the distribution of $\Cs(\SPmain,\SetSP_e)$
in \eqref{equ:inst_skcap}. Since the users are assumed to know
$\SPmain$ (and, therefore, $B_{ab}$), in order to understand how the
channel sparsity effects the probability of outage, we consider the
case where $B_{ab} = L$ (\ie, its mean which is its most likely value)
and $B_{e} = L$.  The only uncertainty at users is $B_q$, the number
of delay bins from which Eve can learn the key.
Conditioned on $B_{ab} = L$ and according to the correlation model
in \eqref{equ:overlap_model},
$B_q$ has a Binomial distribution $\randBino{L}{\theta}$.

From Theorem \ref{thm:info_leak}, the outage probability is
\begin{align}
    \Pout &= \Pr\left( R > \Cs(\SPmain,\SetSP_e) \right) \nnum \\
          &\approx \Pr\left( R > \frac{\snr^2}{L \ln 2} \left( 1 - |\eta|^2 \frac{B_q}{L} \right) \right) \nnum \\
          &= \Pr\left( B_q > \frac{1}{|\eta|^2} \left( 1- \ln 2 \frac{LR}{\snr^2} \right) \right)\ , \label{equ:pout_wideband}
\end{align}
where $L = (W \maxdelay)^{\delta}$. We see that a larger $\snr$ (SNR),
a larger $W$, or a smaller $\eta$ will decrease the outage
probability. However, the sparsity $\delta$ changes both the
distribution of $B_q$ and the right hand side of the argument in
\eqref{equ:pout_wideband} via $L$. Thus, it is still not clear how
$\delta$ impact $\Pout$. When users don't know the instantaneous
secret key capacity, a conservative strategy is to generate a key at a
smaller rate.

Consider a strategy in which the key is generated at rate $R = \alpha
\Ierg(\snr)$ $(0<\alpha<1)$, \ie, a backoff from the ergodic key rate
\eqref{equ:wideband_approx_ergodic}.  We refer to this strategy as the
``$\alpha$-backoff'' strategy. The outage probability
\eqref{equ:pout_wideband} can now be simplified to be
\begin{align}
    &\Pout \approx \Pr\left( B_q \geq a L \right) \label{equ:outage_rearranged}\\
    \text{where} \quad
    & a= (1-\alpha)\frac{1}{|\eta|^2} + \alpha \theta \ . \nnum
\end{align}

Since $B_q$ approximates $\randBino{L}{\theta}$,
the sparsity $\delta$ (and therefore the actual DoF
$L$) determines the distribution of Eve's DoF $B_q$ to observe the main channel.
From \eqref{equ:outage_rearranged}, we can see that when the
$\alpha$-backoff strategy is used,
the correlation coefficient $\eta$ determines how fast the
  threshold $aL$ deviates from
  $\theta L$ (the mean of $B_q$) as $\alpha$ decreases.  Note that in
  \eqref{equ:outage_rearranged}, the SNR (or equivalently, power $P$)
  does not appear in the formula. This is because the key rate is
  proportional to ergodic key rate $\Ierg(\snr)$, which is a quadratic
  function of $\snr$ in the wideband regime (cf.
  \eqref{equ:wideband_approx_ergodic}), and thus cancels the $\snr^2$
  in \eqref{equ:pout_wideband}.

We next use results from large deviation theory \cite{Arratia89} to
upper bound the tail probability of binomial distribution.
\begin{lemma}\cite[Theorem 1]{Arratia89}
   Let $S_n$ be a binomial random variable $\randBino{n}{p}$. For
   $p< a <1$, \\
   and for $n=1,2,3,\cdots$, then
   \begin{equation}
      \Pr(S_n \geq a n) \leq 2^{-n D(a\|p)}
   \end{equation}
   where
   \begin{equation}
      D(a\|p) \equiv a \log \frac{a}{p} + (1-a)\log \frac{(1-a)}{1-p}
   \end{equation}
   is the Kullback-Leibler divergence between the probability
   distributions $\randBern{a}$ and $\randBern{p}$.
\end{lemma}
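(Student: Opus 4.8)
The plan is to prove this via the Chernoff bound (the exponential form of Markov's inequality), which is the standard route to such large-deviation estimates and yields precisely the claimed exponent. Write $S_n = \sum_{i=1}^n X_i$ with $X_i$ i.i.d.\ $\randBern{p}$. For any $\lambda > 0$, applying Markov's inequality to the nonnegative random variable $e^{\lambda S_n}$ gives
\begin{equation*}
  \Pr(S_n \geq an) = \Pr\!\left(e^{\lambda S_n} \geq e^{\lambda a n}\right) \leq e^{-\lambda a n}\, E\!\left[e^{\lambda S_n}\right] = e^{-\lambda a n}\bigl(1 - p + p e^{\lambda}\bigr)^{n},
\end{equation*}
where the last step uses independence together with the Bernoulli moment generating function $E[e^{\lambda X_1}] = 1 - p + p e^{\lambda}$. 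Thus $\Pr(S_n \geq an) \leq \exp\!\bigl(n\,\phi(\lambda)\bigr)$ for every $\lambda > 0$, where $\phi(\lambda) = -\lambda a + \ln(1 - p + p e^{\lambda})$, and it remains to minimize $\phi$ over $\lambda > 0$. Note that no integrality of $an$ is needed, since the inequality $\Pr(S_n \geq t) \leq e^{-\lambda t}E[e^{\lambda S_n}]$ holds for any real threshold $t$.

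Next I would carry out the optimization. Differentiating, $\phi'(\lambda) = -a + \frac{p e^{\lambda}}{1 - p + p e^{\lambda}}$, and setting $\phi'(\lambda) = 0$ yields $e^{\lambda^\star} = \frac{a(1-p)}{p(1-a)}$. The hypothesis $p < a < 1$ guarantees $\frac{a}{p} > 1$ and $\frac{1-p}{1-a} > 1$, hence $e^{\lambda^\star} > 1$, so $\lambda^\star > 0$ lies in the admissible range; convexity of $\phi$ (its second derivative is the variance of a tilted Bernoulli, hence positive) confirms this critical point is the global minimizer.

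Finally I would substitute $\lambda^\star$ and verify that $\phi(\lambda^\star)$ equals exactly $-D(a\|p)\ln 2$. A short computation gives $1 - p + p e^{\lambda^\star} = (1-p)\bigl(1 + \tfrac{a}{1-a}\bigr) = \tfrac{1-p}{1-a}$ and $e^{-\lambda^\star a} = \bigl(\tfrac{p(1-a)}{a(1-p)}\bigr)^{a}$, so that
\begin{equation*}
  \Pr(S_n \geq an) \leq \left[\left(\tfrac{p(1-a)}{a(1-p)}\right)^{a}\cdot\tfrac{1-p}{1-a}\right]^{n} = \left[\left(\tfrac{p}{a}\right)^{a}\left(\tfrac{1-p}{1-a}\right)^{1-a}\right]^{n} = 2^{-nD(a\|p)},
\end{equation*}
where the last equality is just the definition of $D(a\|p)$ written with base-$2$ logarithms. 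There is no real obstacle here — the result is a textbook Chernoff bound — and the only mildly delicate point is pushing the algebraic simplification through so that the exponent emerges as exactly the Kullback--Leibler divergence $D(a\|p)$ rather than merely a quantity of that order; this is what makes the bound tight to first exponential order and is the reason the lemma is stated with constant $2$ (rather than $1$) in front and no polynomial correction.
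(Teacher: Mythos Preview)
Your Chernoff-bound argument is correct and complete; the optimization is carried out cleanly and the algebra lands on the right exponent. Note, however, that the paper does not actually prove this lemma at all --- it is simply quoted from \cite[Theorem~1]{Arratia89} and used as a black box to bound the outage probability. So there is no ``paper's own proof'' to compare against; your derivation is the standard one and is essentially what underlies the cited result.

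One small quibble with your closing commentary: there is no multiplicative constant ``$2$ in front'' of the bound. The inequality reads $\Pr(S_n \geq an) \leq 2^{-nD(a\|p)}$ with no prefactor whatsoever; the $2$ is merely the base of the exponential, matching the base-$2$ logarithm in the definition of $D(a\|p)$. Had $D$ been defined with natural logarithms the bound would read $e^{-nD(a\|p)}$ instead. This does not affect your proof, only the phrasing of that final remark.
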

By this lemma, the outage probability is upper bounded as
\begin{equation} \label{equ:outage_UB}
    \Pout \leq 2^{-L D(a\|\theta)} \ .
\end{equation}

\begin{figure}
  \centering
  \includegraphics[width=0.55\textwidth]{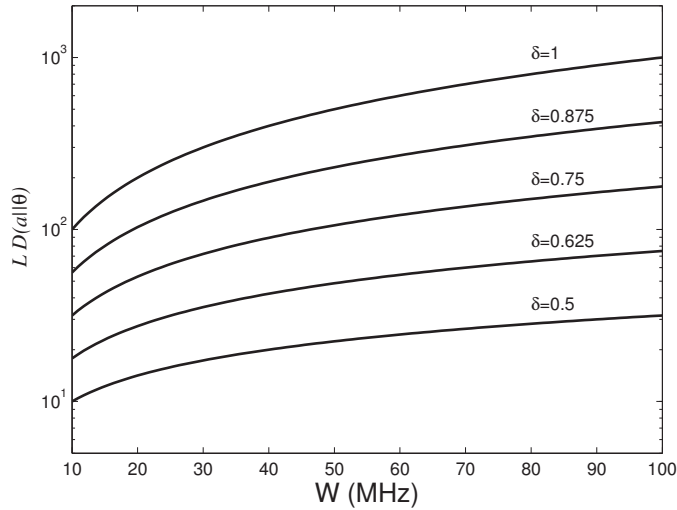}\\
  \caption{Plot of outage exponent $L D(a\|\theta)$ vs. bandwidth
    $W$. Other parameters include the maximum delay spread $\taum =
    100$ns, the conditional probability of overlap in $\bmS_{ab}$ and
    $\bmS_e$ is $\theta=0.5$ and the correlation between channel
    coefficients is $\eta=0.1$.  The sparsity parameter $\delta$ is
    plotted for various values between $0.5 \leq \delta \leq 1$.}
  \label{fig:outageExp_vs_w}
\end{figure}

 Figure \ref{fig:outageExp_vs_w} plots the numerical results of the
 secrecy outage exponent $L D(a\|\theta)$ in the wideband regime.  It
 shows that when the $\alpha$-backoff strategy is used, the mechanism
 through which the channel sparsity impacts the outage probability
 differs from how the channel sparsity impacts the ergodic secret key
 rate.  A richer channel (larger $\delta$) always has larger exponent
 than a sparser channel.  In contrast, Figure \ref{fig:Ierg_vs_snr}
 demonstrates a sparser channel yields a higher ergodic secret key
 rate in the wideband regime.

\section{Conclusions}
\label{sec:sparse.conclusion}

In this paper we study a setting in which two users desire to distill
a common secret key based on the inherent randomness of a reciprocal
wireless channel.  Our particular interest is the effect of channel
sparsity (e.g., in delay), which scales sub-linearly in signal
bandwidth, on secret key generation.  Channel sparsity affects the
inherent randomness of the main channel and increases eavesdropper's
observability of the main channel.  Since channel sparsity is an
  important characteristic of many real-world wireless channels and
  since it has such a large impact on secret key capacity, it is
  crucial to understand this interplay fully.  This will help us to
  deliver secure communication systems with robust guarantees.

We first consider the ergodic setting.  In this setting, at each SNR
there is an adequate bandwidth to maximizes the ergodic secret key
rate. Moreover, when a white sounding sequence is adopted in the
wideband (low-SNR) regime, a higher secret key rate can be achieved by
a sparser channel.

For channels whose sparsity changes relatively slowly, a secrecy
outage measure of performance is adopted.  If the key rate is a
fraction $\alpha$ of the ergodic capacity, we show that richer
channels always have larger exponents characterizing the decay of the
outage probability.  This result illustrates that a larger number of
DoF can smooth out of the detrimental effects of an unknown
eavesdropper state.

\appendix
%\section{Appendix}
%\label{sec:sparse.appendix}

\subsection{Proof of Lemma \ref{lemma:ergodic_capacity}} \label{sec:proof_ergodic_cap}
Apply the results in \cite{Ahlswede_Csiszar93}, \cite{maurer93} where,
respectively, $(X^{n}, S_{ab})$ are Alice's, $(Y^{n},
S_{ab})$ are Bob's, and $(Z^{n}, S_e)$ are Eve's
observations. A lower bound on the ergodic capacity is 
(see \cite[Theorem 1]{Ahlswede_Csiszar93}, \cite[Theorem 3]{maurer93})
\begin{align*}
\Cerg^- &= \frac{1}{n}  \Big[ I(X^{n}, S_{ab}; Y^{n}, S_{ab} )  - I(X^{n}, S_{ab}; Z^{n}, S_e) \Big].
\end{align*}
Using the chain rule and memoryless nature of the source model
in~(\ref{eq.stateModel}) and~(\ref{eq.condGenProcess}), the first term
can be reduced to
\begin{align*}
&I(X^{n}, S_{ab}; Y^{n}, S_{ab} ) \\
&= I(S_{ab}; Y^{n}, S_{ab}) + I(X^{n}; Y^{n}, S_{ab}|S_{ab}) \\
&= H(S_{ab}) + \sum_{i=1}^{n} I(X_i; Y^{n}|X^{i-1}, S_{ab}) \\
&\eqa H(S_{ab}) + \sum_{i=1}^{n} I(X_i; Y_i|S_{ab}) \\
&= H(S_{ab}) + n I(X; Y|S_{ab}) \ ,
\end{align*}
where $(a)$ follows by applying the memoryless property to source $(X_i, Y_i)$.
Similarly, the second term reduces to
\begin{align*}
&I(X^{n}, S_{ab}; Z^{n}, S_e) \\
&= I(S_{ab}; Z^{n}, S_e) + I(X^{n}; Z^{n}, S_e|S_{ab}) \\
&\eqb I(S_{ab}; S_e) + \sum_{i=1}^{n} I(X_i; Z^{n}, S_e | X^{i-1}, S_{ab}) \\
&= I(S_{ab}; S_e) + \sum_{i=1}^{n} I(X_i; Z_i, S_{e}|S_{ab}) \\
&= I(S_{ab}; S_e) + n I(X; Z, S_e |S_{ab}) \ ,
\end{align*}
where $(b)$ is due to the Markov chain $S_{ab} - S_e - Z^{n}$ 
and applying chain rule on the second term.  Thus, the lower bound is
\begin{equation}
   \Cerg^- = I(X;Y|S_{ab}) - I(X;Z,S_e|S_{ab}) + \frac{1}{n} H(S_{ab}|S_e) \ .
\end{equation}

The upper bound is the conditional mutual information 
(see \cite[Theorem 1]{Ahlswede_Csiszar93}, \cite[Corollary 1]{maurer93}):
\begin{align*}
    \Cerg^+
      &=\frac{1}{n} I(X^{n}, S_{ab}; Y^{n}, S_{ab} | Z^{n}, S_e) \\
      &= \frac{1}{n} \Big( I(S_{ab}; Y^{n}, S_{ab} | Z^{n}, S_e) \\
          & \hspace{1in} + I(X^{n}; Y^{n}, S_{ab} | Z^{n}, S_e, S_{ab}) \Big) \\
      &\leq \frac{1}{n} \Big( H(S_{ab}| Z^{n}, S_e) \\
          & \hspace{0.8in} + \sum_{i=1}^{n} I(X_i; Y^{n} | X^{i-1}, Z^{n}, S_e, S_{ab}) \Big) \\
      &\eqc \frac{1}{n} \Big( H(S_{ab}| S_e)  +  \sum_{i=1}^{n} I(X_i; Y_i |  Z_i, S_{e}, S_{ab}) \Big) \\
      &= I(X;Y|Z,S_{ab}, S_e) + \frac{1}{n} H(S_{ab}|S_e) \ ,
\end{align*}
where $(c)$ follows by Markov condition $S_{ab} - S_e - Z^{n}$ and the memoryless property.

\subsection{Derivation of mutual information \eqref{equ:vector_muInfo}} \label{sec:sparse.Proof_mutual_info}
First consider $I(\bmX;\bmY|\SPmain)$:
\begin{align}
  I(\bmX;\bmY | \SPmain ) &= E[ h(\bmX|\SPmain) - h(\bmY|\SPmain) -
        h(\bmX, \bmY|\SPmain) ] \nnum \\
        &= E \left[ \log \left( \frac{\det(\bR_{\bmX})\cdot \det(\bR_{\bmY})}{\det(\bR_{\bmX\bmY})} \right) \right], \label{equ:det_ratio}
\end{align}
where $h(\bmX)$ is the differential entropy \cite{CoverIT} of $\bmX$,
and the expectation is taken over the distribution of $\SPmain$. Let
$\bR_{\bmX}$ denote the covariance matrix of $\bmX$ when the input
$\SPmain = \bS$ is fixed, \ie,
\begin{align*}
  \bR_{\bmX} = E[\bmX\bmX^H|\bS] = \bD \bR_{\bh} \bD^H + \nvar_1 \bI_{K} \ ,
\end{align*}
where $\bR_{\bh} = \diag(\varh_1, \cdots, \varh_L)$ and $\varh_\ell=0$ if $S_\ell=0$.
Similarly,
\begin{align*}
  \bR_{\bmY} &=\bD \bR_{\bh} \bD^H + \nvar_2 \bI_{N} \\
  \bR_{\bmX\bmY} &=
    \left[
      \begin{array}{c|c}
        \bR_{\bmX} & \bD \bR_{\bh} \bD^H \\
        \hline
        \bD \bR_{\bh} \bD^H & \bR_{\bmY} \\
      \end{array}
    \right] \ .
\end{align*}

We simplify the determinants as follows,
\begin{align*}
  \det(\bR_{\bmX}) &= \det(\bD \bR_{\bh} \bD^H + \nvar_1 \bI_{N}) \\ &=
        (\nvar_1)^N \det\left(I_N
        + \frac{\bD \bR_{\bh} \bD^H}{\nvar_1} \right) \\ &\eqa
        (\nvar_1)^N \det\left(I_L
        + \frac{ \Lambda \bD^H \bD \Lambda}{\nvar_1} \right) \\ &\eqb
        (\nvar_1)^N \prod_{\ell=1}^L \left(1+ \frac{P}{\nvar_1} \varh_\ell \right) \\
        &=
        (\nvar_1)^N \prod_{\ell:S_\ell=1} \left(1+ \frac{P}{\nvar_1} \varh_\ell \right),
\end{align*}
where $(a)$ follows by defining $\Lambda = \sqrt{\bR_{\bh}}$ and
applying Sylvester's determinant formula: $\det(\bI_m + \bA\bB)
= \det(\bI_n + \bB\bA)$ where $\bA$ is an $m$-by-$n$ matrix, and $\bB$
is an $n$-by-$m$ matrix.  Step $(b)$ is due
to \eqref{equ:white_sequence_assumption}.  Similarly, we find that
\begin{align*}
  \det(\bR_{\bmX}) &\doteq
  (\nvar_2)^N \prod_{\ell:S_\ell=1} \left(1+ \frac{P}{\nvar_2} \varh_\ell \right) \\ \det(\bR_{\bmX\bmY})
  &\doteq
  (\nvar_1 \nvar_2)^N \prod_{\ell:S_\ell=1} \left(1+ \frac{(\nvar_1+\nvar_2)P}{\nvar_1\nvar_2} \varh_\ell \right).
\end{align*}
Substituting into \eqref{equ:det_ratio}, we get \eqref{equ:Ixy_givenS}.

Follow a similar calculation, we get $I(\bmX;\bmZ, \SetSP_e|\SPmain)$
in \eqref{equ:Ixz_givenS} by noting that
\begin{align*}
  %\bR_{\bmX} &=\bS \bR_{\bh} \bS^H + \nvar_2 \bI_{N} \\
  \bR_{\bmX\bmZ} &=
    \left[
      \begin{array}{c|c}
        \bR_{\bmX} & \bD \bR_{\bh\tilbh} \bD^H \\
        \hline
        \bD \bR_{\bh\tilbh} \bD^H & \bR_{\bmZ} \\
      \end{array}
    \right] \ .
\end{align*}
where $\bR_{\bh\tilbh}$ is a diagonal matrix and its $\ell$-th
diagonal element is equal to $\eta \varh_\ell$ if $Q_\ell=1$ or is
equal to zero if $Q_\ell=0$.

\subsection{Proof of Theorem \ref{thm:sparse_skrate}}  \label{sec:Proof_sparse_skrate}
The proof is similar to \cite[Theorem 4]{chou_it10}.  First note that
$\Iinst(\snr)$ is non-decreasing in $\snr$ and so is
$\Ierg(\snr)$. This can be verified by evaluating
$\frac{\partial\Iinst(\snr)}{\partial \snr}$, which is non-negative.
Define $\bar{I}_{\mathrm{er}}(\snr) = \max_{0<\tsf\leq
1}\tsf \Ierg\left(\frac{\snr}{\tsf}\right)$. Note that
$\bar{I}_{\mathrm{er}}(\snr)$ is a concave and non-decreasing function
of $\snr$.  We are going to show
$\bar{I}_{\mathrm{er}}\left( \frac{P}{\nvar} \right)$ is equal to
$\Rerg\left( \frac{P}{\nvar} \right)$ defined in \eqref{equ:defn_Rerg}
over the average power constraint $P$.  Let $\calP$ be the set of all
sounding policies satisfying average power constraint $P$.
Specifically, let the sounding policy in $\calP$ allocate power $P_s$
to sounding signals with probability $p(s)$ such that $E[P_s]
= \sum_{s} p(s)P_s \leq P$. Note that
$\Rerg\left( \frac{P}{\nvar} \right) \geq \bar{I}_{\mathrm{er}}\left( \frac{P}{\nvar} \right)$. We
can also upper bound
\begin{align*}
    \Rerg\left( \frac{P}{\nvar} \right)
        &=  \max_{\calP} \sum_{s} p(s) \Ierg\left( \frac{P_s}{\nvar} \right) \\ 
        &\leq  \max_{\calP}  \sum_{s} p(s) \left[ \max_{0<\tsf\leq 1}\tsf \Ierg\left(\frac{P_s}{\tsf \nvar} \right) \right]  \\
        & =  \max_{\calP}  \sum_{s} p(s) \bar{I}_{\mathrm{er}}\left(\frac{P_s}{\nvar} \right) \\
        &\lea \max_{\calP}  \bar{I}_{\mathrm{er}}\left( \frac{ \sum_{s} p(s) P_s}{\nvar} \right) \\
        &\leb  \bar{I}_{\mathrm{er}}\left( \frac{P}{\nvar} \right) \ ,
\end{align*}
where $(a)$ and $(b)$ are due to the concavity and non-decreasing
function of $\bar{I}_{\mathrm{er}}(\snr)$.

\subsection{Proof of Theorem \ref{thm:info_leak} } \label{sec:sparse.Proof_info_leakage}

We first show the lower bound and then the upper bound.

\subsubsection{Lower bound \eqref{equ:info_leak_LB}}

\begin{align*}
  I(&K; Z^n, \Phi | s_{ab}, s_e) \\
    &=  H(K|s_{ab}, s_e) - H(K | Z^n, \Phi, s_{ab}, s_e) \\
    &= H(K|s_{ab}, s_e) - [H(K, \Phi | Z^n, s_{ab}, s_e) - H(\Phi | Z^n, s_{ab}, s_e)] \\
    &\gea H(K|s_{ab}, s_e) + H(\Phi | Y^n, s_{ab}, s_e) \\
    & \quad  - [ H(X^n, K, \Phi | Z^n, s_{ab}, s_e) - H(X^n |K, \Phi, Z^n, s_{ab}, s_e) ] \\
    &\geb n(R - \epsilon) + H(\Phi| Y^n, s_{ab}, s_e) - H(X^n| Z^n, s_{ab}, s_e) \\
    &=  n(R - \epsilon) + H(\Phi| Y^n, s_{ab}, s_e) -  H(X^n| Y^n, s_{ab}, s_e) \\
    & \quad + H(X^n| Y^n, s_{ab}, s_e) - H(X^n| Z^n, s_{ab}, s_e) \\
    &\eqc n(R - \epsilon) - H(X^n| \Phi, Y^n, s_{ab}, s_e) - n \Cs(s_{ab},s_e)\\
    &\ged n(R - \Cs(s_{ab},s_e) - 2\epsilon) \ ,
\end{align*}
where $(a)$ is due to the fact that given $(s_{ab},s_e)$, $\Phi - X^n
- Y^n - Z^n$ form a Markov chain. Thus $H(\Phi | Z^n, s_{ab},
s_e) \geq H(\Phi | Y^n, s_{ab}, s_e)$.  $(b)$ holds because entropy is
non-negative and $(K,\Phi)$ is function of $(X^n, s_{ab})$, we can
take $K, \Phi$ away from $H(X^n, K, \Phi | Z^n, s_{ab}, s_e)$.  For
the same reason, we can add $\Phi$ in $H(X^n| Y^n, s_{ab}, s_e)$ and
use chain rule to get $(c)$.  $(d)$ is due to the reliable condition
$\Pr(X^n \neq f_2(Y^n, s_{ab}, \Phi)) \to 0$ by applying Fano's
inequality \cite{CoverIT}.

\subsubsection{Upper bound \eqref{equ:info_leak_UB}}

\begin{align*}
  I(&K; Z^n, \Phi | s_{ab}, s_e) \\
    &=  H(K|s_{ab}, s_e) - H(K | Z^n, \Phi, s_{ab}, s_e) \\
    &\lea nR - H(K | Z^n, \Phi, s_{ab}, s_e)
\end{align*}
because $K \in [1:2^{nR}]$. We need to show there exist a coding scheme such that
\begin{equation} \label{equ:eve_entropy_LB}
    \frac{1}{n}H(K | Z^n, \Phi, s_{ab}, s_e) \geq \Cs(s_{ab},s_e) - 2\epsilon \ .
\end{equation}
We will use the following lemma in the proof.

\begin{lemma} (cf. \cite[eq.(25)]{CsiszarKorner78} \cite[eq.(16)]{chou_allerton11})
\label{lemma:exist_scheme}
    Any $\epsilon > 0$, if $\frac{1}{n} H(\Phi|s_{ab},s_e) \geq H(X|Y,s_{ab}) + \epsilon$,
    there exists a coding scheme where $f_2(\cdot)$ and $f_3(\cdot)$
    are the decoding functions at Bob and Eve, respectively, such
    that for sufficiently large $n$,
    \begin{enumerate}[(i)]
      \item $\Pr(X^n \neq f_2(Y^n, s_{ab}, \Phi)) \to 0$  \ ,
      \item $\frac{1}{n} H(X^n| K,\Phi, Z^n, s_{ab}, s_e) \leq \epsilon$ \ .
    \end{enumerate}
\end{lemma}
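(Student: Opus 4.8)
The plan is to prove Lemma~\ref{lemma:exist_scheme} by a two-level random-binning construction, carried out with the state pair $(s_{ab},s_e)$ held fixed. Under the degradedness assumption~\eqref{equ:degradness_defn}, conditioned on $(s_{ab},s_e)$ the triples $(X_i,Y_i,Z_i)$ are \iid and form a Markov chain $X^n-Y^n-Z^n$, and moreover $p(x\mid s_{ab},s_e)=p(x\mid s_{ab})$, so that $\Cs(s_{ab},s_e)=I(X;Y\mid s_{ab})-I(X;Z\mid s_{ab},s_e)=H(X\mid Z,s_{ab},s_e)-H(X\mid Y,s_{ab})\ge 0$. First I would fix a public-message rate $R_\phi$ exceeding $H(X\mid Y,s_{ab})$ by a small fixed margin (which the hypothesis $\tfrac1n H(\Phi\mid s_{ab},s_e)\ge H(X\mid Y,s_{ab})+\epsilon$ permits), and let $R$ denote the operating key rate; the construction will need $R>\Cs(s_{ab},s_e)$, which is exactly the regime in force when the lemma is invoked inside Theorem~\ref{thm:info_leak}. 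Then, for every sequence $x^n$, draw an independent, uniform label $\bigl(f_1(x^n),g(x^n)\bigr)\in[1:2^{nR}]\times[1:2^{nR_\phi}]$ and set $K=f_1(X^n)$, $\Phi=g(X^n)$; this is the ensemble over which I would average.

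For part~(i), Bob's decoder $f_2(y^n,s_{ab},\phi)$ returns the unique $x^n$ that is jointly $\epsilon$-typical with $y^n$ under $p(x,y\mid s_{ab})$ and satisfies $g(x^n)=\phi$, declaring an error otherwise. By the standard Slepian--Wolf joint-typicality estimates, the probability that $(X^n,Y^n)$ fails to be jointly typical vanishes and, averaged over the binning, the expected number of spurious sequences in bin $\phi$ that are jointly typical with $y^n$ is at most $2^{n(H(X\mid Y,s_{ab})+o(1))}\,2^{-nR_\phi}\to 0$ since $R_\phi>H(X\mid Y,s_{ab})$; hence $E\bigl[\Pr(X^n\neq f_2(Y^n,s_{ab},\Phi))\bigr]\to0$. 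For part~(ii), Eve's (genie-aided) decoder $f_3(z^n,k,\phi,s_{ab},s_e)$ returns the unique $x^n$ jointly $\epsilon$-typical with $z^n$ under $p(x,z\mid s_{ab},s_e)$ with $f_1(x^n)=k$ and $g(x^n)=\phi$. The expected number of competitors is at most $2^{n(H(X\mid Z,s_{ab},s_e)+o(1))}\,2^{-n(R+R_\phi)}$, which vanishes as long as $R+R_\phi>H(X\mid Z,s_{ab},s_e)$; because $R_\phi\approx H(X\mid Y,s_{ab})$ and, by the identity above, $H(X\mid Z,s_{ab},s_e)-H(X\mid Y,s_{ab})=\Cs(s_{ab},s_e)<R$, this holds. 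So Eve reconstructs $X^n$ from $(K,\Phi,Z^n)$ with vanishing error probability, and Fano's inequality~\cite{CoverIT} gives $\tfrac1n E\bigl[H(X^n\mid K,\Phi,Z^n,s_{ab},s_e)\bigr]\to0$.

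It remains to derandomize: since the ensemble averages of both $\Pr(X^n\neq f_2(Y^n,s_{ab},\Phi))$ and $\tfrac1n H(X^n\mid K,\Phi,Z^n,s_{ab},s_e)$ tend to zero, a Markov-inequality and union-bound argument selects one realization of the labels---hence one deterministic quadruple $(f_1,g,f_2,f_3)$---for which both quantities are below $\epsilon$ for all sufficiently large $n$; one also verifies that this $\Phi$ is near-uniform, so $\tfrac1n H(\Phi\mid s_{ab},s_e)\ge H(X\mid Y,s_{ab})+\epsilon$ as required (padding $\Phi$ with independent uniform bits up to the allotted budget if needed). The only non-routine step, I expect, is the rate bookkeeping for Eve's decoder in part~(ii): one must recognise that it is the \emph{pair} $(K,\Phi)$, not $\Phi$ alone, that lets Eve resolve $X^n$ against her degraded side information; that the extra binning rate this requires beyond Bob's is exactly $H(X\mid Z,s_{ab},s_e)-H(X\mid Y,s_{ab})$; and that the degradedness~\eqref{equ:degradness_defn}---which simultaneously makes $X-Y-Z$ a Markov chain and forces $p(x\mid s_{ab},s_e)=p(x\mid s_{ab})$---is precisely what identifies this quantity with $\Cs(s_{ab},s_e)$, so that the standing hypothesis $R>\Cs(s_{ab},s_e)$ closes the argument. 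Everything else---typicality, Slepian--Wolf covering, Fano's inequality, and derandomization---is routine and may be quoted from \cite{CsiszarKorner78,chou_allerton11,CoverIT}.
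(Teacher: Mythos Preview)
Your proposal is correct and matches the paper's own treatment. The paper does not give a detailed proof of this lemma: it simply states that the argument ``uses a random coding technique to show existence,'' identifies part~(i) with the Slepian--Wolf theorem and part~(ii) with ``the standard equivocation analysis,'' and then defers to \cite{CsiszarKorner78,chou_allerton11} for details. Your two-level random-binning construction, with Bob decoding $X^n$ from $(\Phi,Y^n)$ at rate $R_\phi>H(X\mid Y,s_{ab})$ and a genie-aided Eve decoding $X^n$ from $(K,\Phi,Z^n)$ at combined rate $R+R_\phi>H(X\mid Z,s_{ab},s_e)$, followed by Fano and derandomization, is exactly that standard argument spelled out.

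One point worth flagging, which you already caught: the lemma as stated in the paper does not list $R>\Cs(s_{ab},s_e)$ among its hypotheses, yet your rate bookkeeping for part~(ii) requires it. You handle this correctly by observing that the lemma is only ever invoked inside the proof of Theorem~\ref{thm:info_leak}(ii), where $R_e(s_{ab},s_e)=R-\Cs(s_{ab},s_e)>0$ is a standing assumption; the paper leaves this dependence implicit.
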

The proof of Lemma \ref{lemma:exist_scheme} uses a random coding
technique to show existence. The first statement is exactly the
Slepian-Wolf theorem \cite{SlepianWolf73}. The second statement is
the standard equivocation analysis which says if Eve knows $K$ and
$\Phi$ along with her observation $Z^n$, she can recover sequence
$X^n$. We refer to \cite{CsiszarKorner78, chou_allerton11} for the
details.

Adopting the coding scheme in Lemma~\ref{lemma:exist_scheme} 
where the public message $\Phi \in [1: 2^{n(H(X|Y,s_{ab}) + \epsilon)}]$.
We prove \eqref{equ:eve_entropy_LB} through a sequence of
(in)equalities:
\begin{align*}
    H(&K | \Phi, Z^n, s_{ab}, s_e) \\
    &=  H(X^n, K| \Phi, Z^n, s_{ab}, s_e) - H(X^n| K, \Phi, Z^n, s_{ab}, s_e) \\
    &\gea H(X^n, K| \Phi, Z^n, s_{ab}, s_e) - n \epsilon \\
    &= H(X^n, K, \Phi| Z^n, s_{ab}, s_e) - H(\Phi| Z^n, s_{ab}, s_e) - n \epsilon \\
    &\eqb H(X^n| Z^n, s_{ab}, s_e) - H(\Phi| Z^n, s_{ab}, s_e) - n \epsilon \\
    &\geq H(X^n| Z^n, s_{ab}, s_e) - H(\Phi) - n \epsilon \\
    &\gec H(X^n| Z^n, s_{ab}, s_e) - n H(X| Y, s_{ab})  - 2n \epsilon \\
    &= n (H(X| Z, s_{ab}, s_e) - H(X| Y, s_{ab})  - 2\epsilon) \\
    &= n ( \Cs(s_{ab},s_e) - 2\epsilon) \ ,
\end{align*}
where $(a)$ is due to  $(ii)$ of Lemma~\ref{lemma:exist_scheme}.
$(b)$ follows because $(K, \Phi)$ is a function of $(X^n, s_{ab})$.
$(c)$ comes from the fact that $\Phi \in [1: 2^{n(H(X|Y,s_{ab}) + \epsilon)}]$
and the entropy is upper bounded by a uniform distribution.
This completes the proof.

\bibliographystyle{ieeetr}
\bibliography{ref_papers,ref_books}

\end{document}